\newtheorem{theorem}{Theorem}[section]
\newtheorem{lemma}{Lemma}[section]
\newtheorem{example}{Example}[section]
\begin{document}
%
\title{A System-Theoretic Clean Slate Approach to
Provably Secure Ad Hoc Wireless Networking}
%
%
%

\author{Jonathan Ponniah*,~\IEEEmembership{Member,~IEEE,}
        Yih-Chun Hu*,~\IEEEmembership{Member,~IEEE,}
        and~P. R. Kumar**,~\IEEEmembership{Fellow,~IEEE}
\thanks{*CSL \& ECE, Univ. of Illinois, 1308 West Main St., Urbana, IL 61801. Email: \{ponniah1,yihchun\}@illinois.edu. Tel: 217-333-4220.

**Corresponding author: ECE, Texas A\&M University, 3259 TAMU, College Station, TX 77843-3259. Email: prk@tamu.edu. Tel: 979-862-3376.

This paper is partially based on work supported by
NSF under Contract Nos. CNS-1302182, CCF-0939370 and
CNS-1232602, AFOSR under Contract No. FA-9550-13-1-0008, and USARO under Contract No. W911NF-08-1-0238.}}
\maketitle

\begin{abstract}

Traditionally, wireless network protocols have been designed for performance. Subsequently, as attacks have been identified, patches have been developed. This 
has resulted in an ``arms race'' development process
of discovering vulnerabilities and then patching them. 
The fundamental difficulty with this approach is that
other vulnerabilities may still exist. 
No provable security or performance guarantees can ever be provided.

We develop a system-theoretic approach to security that provides a complete
protocol suite with provable guarantees, as well as proof of
min-max optimality with respect to any given utility function of
source-destination rates.
Our approach is based on a model
capturing the essential features of
an ad-hoc wireless network that has been infiltrated with hostile nodes.
We consider any collection
of nodes, some good and some bad, possessing specified capabilities vis-a-vis cryptography, wireless communication and clocks.
The good nodes do not know the bad nodes.
The bad nodes can collaborate perfectly,
and are capable of any disruptive acts ranging from simply jamming to
non-cooperation with the protocols in any manner they please.

The protocol suite caters to the complete life-cycle,
all the way from 
birth of nodes, through all phases of ad hoc
network formation, leading
to an optimized network carrying data reliably.
It provably achieves the min-max of the utility function,
where the max is over all protocol suites published and followed by the
good nodes, while the min is over all Byzantine behaviors of the bad nodes.
Under the protocol suite, the bad nodes do not benefit from any
actions other than jamming or cooperating.

This approach supersedes much previous work that deals with several types of attacks including wormhole, rushing, partial deafness, routing loops, routing black holes, routing gray holes, and network partition attacks.
\end{abstract}


\begin{IEEEkeywords}
Ad hoc wireless networks, security.
\end{IEEEkeywords}

%
\IEEEpeerreviewmaketitle

\section{Introduction} \label{Introduction}
%
%
%
%


\IEEEPARstart{O}{ur} focus is on the problem of security of ad-hoc, multi-hop, wireless networks.
The wireless nodes in these types of networks need to
determine when to transmit packets and at what power levels, discover routes from sources to destinations, and ensure overall end-to-end reliability, all without any centralized controller guiding the process. This requires a suite consisting of multiple protocols.

Several candidate have been proposed.
Medium access control protocols include
IEEE 802.11 \cite{ieee96} and MACAW \cite{bharghavan1994macaw},
power control protocols include COMPOW \cite{KawadiaK05} and
PCMA \cite{monks2001power}, routing protocols include 
DSDV \cite{PerkinsB94}, AODV \cite{AODVRFC}, DSR \cite{DSRRFC}, and OLSR \cite{OLSRRFC}, and transport protocols include TCP \cite{stevens1995tcp} and  variations for ad hoc networks \cite{atcp,adtcp,crvp01,tcp-door}.

All the above protocols are designed on the assumption that all nodes are ``good,''
and will conform to the protocol.
Some nodes can however be malicious, deliberately intent on disrupting the network, a vulnerability especially acute
since the very purpose of ad hoc networks is to allow any node to join a network.
For wireless networks used in safety-critical applications,
e.g., vehicular networks, vulnerabilities can be dangerous.
Moreover, many wireless networking protocols have been
based on wireline protocols, with possible susceptibilities to novel
over the air attacks.

The assumption of benignness, implicit or explicit,
has been the traditional starting point of protocol development.
Systems have been first designed to provide high performance.
Subsequently, as vulnerabilities have been discovered, they have been patched on a case by case basis.
For example, the ``wormhole'' attack was discovered in \cite{HuPerrig2003}, where
an attacker sets up a false link between two nodes.
It is countered by a fix using temporal and geographical packet leashes \cite{HuPerrig2003,Poturalski2008}.
The ``rushing'' attack against DSR was discovered in
\cite{HuPerrigRushing}, in which attackers manipulate the network topology.
This is countered by a fix using network discovery chains.
The ``partial deafness'' attack against 802.11 was discovered in \cite{Choi}, in which an attacker artificially reduces its link quality to draw more network resources.
It is countered by a fix using queue regulation at the access point.  Other attacks against DSR are the routing loop attack in which an attacker generates forged routing packets causing data packets to cycle endlessly; the routing black hole attack in which an attacker simply drops all packets it receives; and the network partition attack in which an attacker injects forged routing packets 
to prevent one set of nodes from reaching another.  These attacks are all countered in the Ariadne protocol \cite{HuAriadne} by the joint use of routing chains, encryption, and packet leashes.  Some protocols such as Watchdog and Pathrater \cite{Marti2000} try to pre-empt attacks by maintaining a blacklist that tracks malicious behavior, but this backfires if an attacker maligns a good node, causing other good nodes to add that node to their blacklists.  
These attacks are not targeted at violating privacy of communications between nodes, which can be avoided simply by encryption. 
Rather, they are generally Denial of Service attacks (DoS), which usually take advantage of algorithms that assume the participating users are good or cooperative.  

The basic problem with this arms race approach of hardening algorithms initially designed for good performance is that
one never knows what other vulnerabilities or attacks exist.
Thus no guarantees can be provided about the security of the protocols at any stage of the arms race process. 
 
Our goal in this paper is to propose an alternate clean slate system-theoretic
approach to security that provides provable performance guarantees. 
We pursue a model-based approach,
comprising a physical model of node capabilities, clocks,
cryptography, and wireless communication.
It is an initial attempt to holistically model
the entire dynamics of an ad-hoc wireless network that has been infiltrated with hostile nodes.

Our goal is to design a protocol suite for the complete life-cycle of the wireless
system, all the way from the very birth of the nodes, 
and continuing through all phases of the network formation process,
to a long-term operation
where the network is carrying data reliably from
sources to their destinations.
The good nodes don't know who the bad nodes are, and are required to follow
the published protocol suite.
Throughput all phases, the bad nodes 
can perfectly collaborate and incessantly indulge in any disruptive
behavior to make the network formation and operation dysfunctional.
They could just ``jam,''
or engage in more intricate behavior
such as not relay a packet,
advertise a wrong hop count, advertise a wrong logical topology, cause packet 
collisions,
disrupt attempts at cooperative scheduling, drop an ACK,
refuse to acknowledge a neighbor's handshake, or
behave inconsistently.


We design
a protocol suite that is provably secure against all such attacks by the malicious
nodes. Not only that, it guarantees min-max optimal performance.
%
%
The
performance is
described by a given utility function, which the good nodes wish to maximize
by publishing a complete
protocol suite and conforming to it.
The bad nodes on the other hand aim to minimize this utility by
indulging in all manner of ``Byzantine''
behavior described above not conforming to the protocol.

This leads to a zero-sum game.
Since the good nodes first announce the protocol,
the best value of the utility function that the good nodes can hope to
attain is its max-min, where the maximization is over all
protocol suites,
and the minimization is over all Byzantine behaviors
of the bad nodes.
We will prove that the protocol suite designed attains this
max-min to within any $\epsilon >0$.
Moreover, we establish three even stronger results.
%

First, this game actually has a saddle point,
i.e., the protocol suite attains the min-max (to within any $\epsilon >0$).
(Generally, min-max results in a higher utility than max-min, since
the bad nodes have to first disclose their tactics).

Second,
the bad nodes can do no better than just jamming or conforming to the published
protocol suite on each ``concurrent transmission vector,''
a generalization of the notion
of an ``independent set'' of nodes that can simultaneously transmit.
They do not benefit from more elaborate Byzantine antics.

Third, the protocol optimally exploits any non-hostile
behavior of the bad nodes.
If they behave suboptimally, i.e., are not as hostile as they could be,
then it will take advantage.
This is a desirable feature since while one wants to
design protocols that are guaranteeably secure in the worst case, one would want
them to exploit any benignness in the environment.

Some important qualifications need to be noted.
First, the results are valid only for the postulated model
of the network. Future research may identify
technological capabilities outside the model
that can attack the protocol suite.
Such discoveries will, one hopes, lead to the development
of more general models and procotols provably secure in them.
The research enterprise will thereby be elevated to a higher level;
instead of reacting to
each proposed protocol one reacts to each proposed model,
with provable guarantees provided at each step.
Section \ref{Concluding Remarks} provides some such directions
for model generalization.

Second, though not merely asymptotic, the optimality is over a large time period, and the overhead of transient phases of the protocol
may be high.
However, there is much scope for optimizing protocol overhead
while preserving security.


Third, how should one view the proposed protocol suite?
The answer is layered.
At a minimum, it can be regarded as a constructive
existence proof that one can indeed provide
optimal performance while guaranteeing security,
with the identified model class only serving as an exemplar of conditions
under which this can be done.
To a more receptive reader, the designed protocol suite is
suggestive of of how one can do so.
The architectural
decomposition into several phases could
perhaps be kept in mind by future protocol designers.

At any rate, one hopes that this approach will trigger several
critical reactions among a skeptical readership, and
lead to follow up work that designs protocols with guaranteed
security and performance for more general model classes.

Section \ref{The Model} describes the model,
Section \ref{The Main Results} the main results, Section \ref{The Outline of the Approach} an outline of the approach,
Section \ref{The Phases of the Protocol Suite} the
protocol suite, and
Section \ref{Feasibility of Protocol and its Optimality} proves
feasibility and optimality.

\section{The Model} \label{The Model}

The model of
an ad-hoc wireless network
infiltrated by hostile nodes
can be organized into four categories: the nodal model {\bf (N)}, communication model {\bf (CO)}, clock behavior {\bf (CL)}, and cryptographic capabilities {\bf (CR)}.

{\bf Nodal model:} {\bf (N1)} There are $n$ nodes, some good and some bad. 
Let $G$ denote the set of good nodes, and its complement $B$ the set of bad nodes.
{\bf (N2)} The good nodes do not know who the bad nodes are a priori.
{\bf (N3)} The bad nodes are able to fully coordinate their actions, and are fully aware of their collective states (equivalent to unlimited bandwidth between them).
{\bf (N4)} The good nodes are all initially powered off, and they all turn on within $U_{0}$ time units of the first good node that turns on. 

{\bf Communication model:}
{\bf (CO1)} Each node $i$ can choose from among a finite set of transmission/reception modes
$M_{i}$ at each time. Each mode corresponds, if transmitting, to
a joint choice of power level, modulation scheme and encoding scheme for each other intended receiver node,
or to just listening and not transmitting, or even to ``jamming,'' which
simply consists of using its power output to emit noise.
{\bf (CO2)} The good nodes are half-duplex, i.e., cannot transmit and receive simultaneously.
{\bf (CO3)} We call $c = ( c_{1}, c_{2}, \ldots , c_{n})$
denoting the mode choices of all the nodes made at a certain time,  as
a ``concurrent transmission vector'' (CTV).
(It is more general than an independent set that is sometimes used
to model wireless networks).
We will denote by $c_{G} = (c_{i}: i \in G )$ and $c_{B} = (c_{i}: i \in B )$
the vectors of choices of modes made by the good and bad nodes respectively,
with each $c_{i} \in {\cal{M}}_{i}$, and let ${\cal{C}}_{G}$ and ${\cal{C}}_{B}$
denote the sets of all such choices. We will denote by
${\cal{C}} := {\cal{C}}_{G} \times {\cal{C}}_{B}$, the set of all CTVs.
{\bf (CO4)} Each $c$ results
in a ``link-rate vector'' $r(c)$ of dimension $n(n-1)$. 
Its $ij$-th component, $r_{ij}(c)$, is the data rate at which
bits can be sent from node $i$ to node $j$ at that time.
Due to the shared nature of the wireless medium,
the rate depends on the transmission mode choices made
by all the other nodes, as well as
the geographic locations of the nodes,
the propagation path loss, the ambient noise, and all other physical characteristics
affecting data rate. 
A component $r_{ij}(c)$ may be zero, for example
if the SINR at $j$ is below a threshold value for decoding, or if node $i$ is not transmitting to node $j$.
{\bf (CO5)} 
If a certain rate vector is achievable then lower rates are also achievable.
To state this,
let $\Lambda := \{ r_{ij} (c) : i \neq j , c \in {\cal{C}} \}$
denote the finite set of
all possible rates than can be achieved.
We suppose that
for every $c$, and $r' \leq r(c)$ (understood component wise)
with all elements in $\Lambda$,
there is a choice $c' \in {\cal{C}}$ such that $r(c') = r'$.
This assumption is not strictly necessary, but it
helps to simplify the statement that bad nodes can claim 
to receive only at low rates.
{\bf (CO6)} In the case of a bad node $j$, the rate $r_{ij}(c)$ may be the result of
some other bad node being able to decode
the packet from $i$ at that rate, and then passing on that
packet to $j$, since bad nodes can collaborate perfectly.
In the case of a bad node $i$, the rate $r_{ij}(c)$ may be the result of
some other bad node being able to transmit
the packet successfully to $j$ at that rate, pretending to be $i$.
Meanwhile, in either case, the bad node may be jamming.
Thus a bad node can both jam and appear to be cooperating, whether
transmitting or receiving, at the same time.
{\bf (CO7)} 
The bad nodes can claim to have received transmissions from each other
at any of the rates in the finite set $\Lambda$, as they please.
To state this, 
for $c = (c_{G}, c_{B})$,
we will partition the resulting link-rate vector as
$r(c) = (r_{GG}(c), r_{GB}(c), r_{BG}(c), r_{BB}(c))$,
where $r_{BG}$ denotes the link-rates from the bad nodes to the good nodes, etc.
We suppose that for every 
$c = (c_{G}, c_{B})$ and every $r'$
with all elements in $\Lambda$,
there is a $c'_{B} \in {\cal{C}}_{B}$ such that
$r(c_{G}, c'_{B}) = (r_{GG}(c), r_{GB}(c), r_{BG}(c), r')$.
{\bf (CO8)} The good nodes know $\Lambda$,
and an upper bound on the cardinalities of the ${\cal{M}}_{i}$'s,
but do not
know the values of the vectors $r(c)$ for any $c \in {\cal{C}}$.
{\bf (CO9)} The assumption that the link-rate vector $r(c)$
does not change with time
implicitly assumes that nodes are not mobile to any significant extent.
We comment further about this assumption in
Section~\ref{Concluding Remarks},
%
%
%

{\bf Clock model:} {\bf (CL1)} Each good node $i$ has a local continuous-time clock that it initializes to zero when it turns on. Its time $\tau^{i}(t)$ is affine with respect to some reference time $t\geq0$, i.e.,  $\tau^{i}(t)=a_{i}t+b_{i}$ where $a_{i}$ and $b_{i}$ are called the skew and offset respectively.  
Wlog, the time $t$ above and in (N4) is
taken equal to the clock time of the first good node to turn on.  
{\bf (CL3)} Denoting the relative skew and offset between nodes $i$ and $j$ by $a_{ij}:=\frac{a_{i}}{a_{j}}$ and $b_{ij}:=b_{i}-a_{ij}b_{j}$,  
node $i$'s time with respect to node $j$'s time $s$ is $\tau^{i}_{j}(s) =a_{ij}s+b_{ij}$.
We assume $0<a_{ij}\leq a_{max}$.  As a corollary of (N4,CL1,CL3), $|b_{ij}|\leq a_{max}U_{0}$, since $\tau^{i}(U_{0})\geq0$.  {\bf (CL4)} The good nodes do not know their skew or offset a priori.  {\bf (CL5)} Finally, due to its digital processor, a good node $i$ can only observe a quantized version of its continuous-time local clock $\tau^{i}(t)$. 

{\bf Cryptographic capabilities:} {\bf (CR1)} Each node is assigned a
public key and a private key; information encrypted by a private key
can only be decrypted with the corresponding public key.
The private key is never revealed by a good node to any other node.
Possession of a public key does not enable an attacker to forge,
alter, or tamper with an encrypted packet generated with the
corresponding private key.  The good nodes
encrypt all their transmissions. 
{\bf (CR2)} Each node possesses the public key of a central authority.
{\bf (CR3)} Each node possesses an identity certificate, signed by the
central authority, containing node $i$'s public key and ID number.
The certificate binds node $i$'s public key to its identity.  {\bf
(CR4)} Each node possesses a list of all the other $n$ node IDs.


\section{The Main Results} \label{The Main Results}

Each time that the good nodes make a certain choice
$c_{G}$, the bad nodes
could respond with some choice drawn only from
a certain subset ${\cal{C}}_{B,c_{G}} \subseteq
{\cal{C}}_{B}$.
In this way they could ensure that only the subset ${\cal{E}} := \{ (c_{G},c_{B}):
c_{G} \in {\cal{C}}_{G}, c_{b} \in {\cal{C}}_{B,c_{G}} \}$ is ever employed by the network.
If so, we will say that ${\cal{E}}$ is \emph{enabled}, while its complement
${\cal{D}} := {\cal{C}} \setminus
\{ (c_{G},c_{B}):
c_{G} \in {\cal{C}}_{G}, c_{b} \in {\cal{C}}_{B,c_{G}} \}$ is \emph{disabled} by the bad nodes.
We will denote by $\Delta$ the set of all such sets ${\cal{D}}$ that they have the capability to disable.
For any set ${\cal{E}}$ of enabled CTVs,
let
${{R(\cal{E})}} :=$ ConvexHull$( \{ r(c) : c \in {\cal{E}} \})$
\emph{be the set of link rate-vectors supported by} ${\cal{E}}$, i.e., generated by time sharing over ${\cal{E}}$.
Let  ${\cal{G}}( {{\cal E}})$ be a directed graph
over the nodes, where there is an edge
$ij$
if and only if $r_{ij} (c)>0$ for some $c \in {{ R}}( {{\cal E}})$.

We assume that the good nodes can communicate in a multi-hop fashion with each other over bidirectional links at some minimal positive rate, regardless of what the bad nodes do: \\
\noindent
{\bf Connectedness Assumption (C):}
Let ${\cal{G}}^{*} := {\cal{ G}}( {{\cal C}} \setminus {{\cal D}}^{*})$ be the
graph resulting from the  maximum set ${\cal{D}}^{*} \in \Delta$ that the
bad nodes can disable.
We will assume that the good nodes are connected
in the subgraph of ${\cal{G}}^{*}$ that consists only of
edges $ij$ for which both $ij$ as well as $ji$ are edges in ${\cal{G}}^{*}$.

Denoting by $P_{ij}$ the set of all paths from $i$ to $j$,
the \emph{multi-hop capacity region} 
of  $n(n-1)$-dimensional end-to-end source-destination throughput vectors is
defined in the standard way as
${{ C}}( {\cal{ E}}) := \{x:$ For some vector $y \geq 0$
with $0 \leq \sum_{p: \ell \in p} y_{p} \leq r_{\ell}$
for some $r \in {{R(\cal{E})}}$,
$x_{ij}= \sum_{p \in P_{ij}} y_{p}$ for all $1 \leq i,j \leq n, j \neq i \}$.

We employ a utility function defined over 
the throughputs of any subset of
source-destination pairs of interest: \\
{\bf Utility function assumption (U):}
For any subset $S \subseteq \{ 1, 2, \ldots , n \}$
and any throughput vector $x$, let $U(x,S )$ depend
only on $x_{ij}$ for $i,j \in S$.
For every $S$, $U(x, S )$ is continuous and monotone increasing
in the components of  $x$.

We now consider the game where 
the good nodes wish to maximize it for the nodes
perceived to be good, while the bad nodes wish to minimize it over all their Byzantine behaviors.  
To obtain an upper bound on utility, 
suppose that the bad nodes disable only the CTVs in
${\cal{D}}$ and
reveal this choice to the good nodes.
Let ${\cal{E}} := {\cal{C}} \setminus {\cal{D}}$.
If ${\cal{G}} ({\cal{E}})$ has several strongly connected components, then,
by the connectedness assumption (C), the good nodes are all in the same
component, denoted by $F({\cal{E}})$, and thus know that
the nodes outside $F({\cal{E}})$ are bad.
They will therefore only consider the utility accrued as $U(x,F({\cal{E}}))$,
and maximize it over all $x \in {\cal{C}} ({\cal{E}})$.
Hence an upper bound on achievable utility is $\displaystyle\min_{\substack{{{\cal{D}} \in \Delta}}}\hspace{2mm}\displaystyle\max_{\substack{x \in {\cal{C}}({\cal{C}} \setminus {\cal{D}})}}U(x,F({\cal{C}} \setminus {\cal{D}}))$. Our main result, elaborated on in Theorem \ref{main}, is:

\vspace{0.05in}

\begin{theorem}
Consider a network that satisfies (N), (CO), (CL), (CR), (C) and (U).  Given an arbitrary $\epsilon$, where $0<\epsilon<1$, the protocol described in Section \ref{sec:protocol} ensures that all the good nodes
obtain a common estimate of the component that they are all members of, and achieves the utility 
\begin{align}
	\label{upperbound}
	(1-\epsilon) \displaystyle\min_{\substack{{{\cal{D}} \in \Delta}}}\hspace{2mm}\displaystyle\max_{\substack{x \in {\cal{C}}({\cal{C}} \setminus {\cal{D}})}}
	U(x,F({\cal{C}} \setminus {\cal{D}})).
\end{align}
\end{theorem}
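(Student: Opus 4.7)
The plan is to decompose the proof along the phases of the protocol suite and show that, at the end of the discovery phases, the good nodes reach a shared epistemic state in which (i) they agree on the strongly connected component containing them, and (ii) they have computed and are executing an optimal time-sharing schedule over the CTVs the bad nodes have not effectively disabled. The core insight to establish is that every Byzantine behavior is reducible, from the good nodes' standpoint, to a choice of a set ${\cal{D}} \in \Delta$ of disabled CTVs, so that the game collapses to a maximization over ${\cal{C}} \setminus {\cal{D}}$ and the outer minimum over $\Delta$ is realized by the bad nodes' actual play.

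Concretely, I would argue in four steps. First, using (N4), (CL1)--(CL5) and the signed identity certificates (CR3), the wake-up and clock-synchronization phases produce, at every good node, a common time reference (up to affine error) and a consistent view of which identities are reachable; any bad node that deviates inconsistently on authenticated exchanges is either ignored or pinned to a detectable contradiction. Second, in neighbor- and topology-discovery, the good nodes exhaustively probe the modes in ${\cal{M}}_i$ and exchange signed reports of received rates; by (CR1)--(CR3) these reports are unforgeable, so each good node can infer exactly the entries $r_{GG}(c)$ and $r_{GB}(c)$, $r_{BG}(c)$ of any CTV whose intended transmission actually occurred. A CTV for which a bad node jams, refuses to relay, or otherwise prevents the nominal $r_{ij}$ from materializing is detected and placed in ${\cal{D}}$; because the bad nodes can make arbitrary claims only about the $r_{BB}$ entries (assumptions (CO6),(CO7)), and because $U(x, F)$ depends only on $x_{ij}$ with $i,j$ in the good-containing component, these lies do not affect the utility calculation. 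Third, once ${\cal{D}}$ has stabilized, assumption (C) guarantees that every good node lies in the same strongly connected component $F({\cal{C}} \setminus {\cal{D}})$ of ${\cal{G}}({\cal{C}} \setminus {\cal{D}})$; authenticated all-to-all flooding along this component lets each good node reconstruct the identical picture and solve the same convex program $\max_{x \in {\cal{C}}({\cal{C}} \setminus {\cal{D}})} U(x, F({\cal{C}} \setminus {\cal{D}}))$, yielding a common schedule of weights $\{\alpha_c\}$. Fourth, in the data phase, if any previously enabled CTV turns out to be jammed, it is appended to ${\cal{D}}$ and the schedule is recomputed; since $\Delta$ is finite under the finite mode sets of (CO1),(CO8), only finitely many such re-optimizations occur before convergence.

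The principal obstacle is the second step: showing that every attack class listed in the introduction (wormhole, rushing, gray-hole, partition, deafness, etc.) collapses, under the authenticated probe-and-report discipline, to a well-defined set ${\cal{D}} \in \Delta$ attributable to the bad coalition rather than to good nodes. This requires a case analysis that (a) any inconsistency between two signed reports involving a common node is traceable to a bad signer (by non-repudiation from (CR1),(CR3)), (b) any rate shortfall on a CTV that cannot be so traced is, by definition of $\Delta$, within the bad nodes' disabling power, and (c) the bad nodes gain nothing by partially cooperating, since partial cooperation on a CTV $c$ either yields the nominal $r(c)$ (which is strictly beneficial for the utility, by monotonicity in (U)) or is indistinguishable from full disabling. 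A subsidiary but essential issue is that the rate-discovery trials and clock re-syncs must themselves be scheduled without deadlock despite bad-node interference, which the protocol handles by pre-committed round-robin slots whose ownership is authenticated.

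Finally, the $(1-\epsilon)$ factor is obtained by amortization. All discovery, consensus, and re-optimization phases have length bounded by a polynomial in $n$, $|\Lambda|$, $\max_i |{\cal{M}}_i|$, and $U_0$, which are fixed by the model, whereas the data phase duration $T$ is a free parameter of the protocol. Choosing $T$ large enough that the transient overhead is at most an $\epsilon$-fraction of total time, and noting that during the data phase the averaged throughput approaches $x^{*}$ by the law of large numbers applied to the scheduled CTVs, the realized utility is at least $(1-\epsilon)\,\max_{x \in {\cal{C}}({\cal{C}} \setminus {\cal{D}})} U(x, F({\cal{C}} \setminus {\cal{D}}))$ for the ${\cal{D}}$ actually induced by the adversary. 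Since ${\cal{D}} \in \Delta$, this is at least $(1-\epsilon)\min_{{\cal{D}} \in \Delta}\max_{x \in {\cal{C}}({\cal{C}} \setminus {\cal{D}})} U(x, F({\cal{C}} \setminus {\cal{D}}))$, giving the claimed bound.
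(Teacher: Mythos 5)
Your high-level architecture matches the paper — discovery phases establish a shared view, the network optimizes a common schedule, failed CTVs are pruned in a Verification step, and the $(1-\epsilon)$ comes from amortizing the protocol overhead against an arbitrarily long data phase — but there are two substantive gaps relative to the paper's actual proof.

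First, and most seriously, you treat clock synchronization as a near-trivial byproduct of signed exchanges ("any bad node that deviates inconsistently on authenticated exchanges is either ignored or pinned to a detectable contradiction"). This elides the central technical obstacle the paper confronts: a bad node can sign \emph{self-consistent but false} timing data, so relative skews estimated along paths through bad nodes can be wrong without any detectable forgery or signature inconsistency. The paper's fix — the Consistency Check of Section~\ref{The Consistency Check Phase} and Theorem~\ref{bigtheorem}, with the chain of Lemmas~\ref{istarmaxsumlemma}--\ref{delaysumlemma} — shows that if the check is started late enough, any set of (possibly non-affine) left/right clocks that the bad coalition could use to emulate its declared skews must eventually violate the delay bound, because the sum of forwarding delays is forced to exceed $nK$. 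Without some argument of this flavor your "shared epistemic state" is not established: two good nodes could carry incompatible reference-clock estimates, desynchronize the schedule, and the min-max claim fails.

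Second, you propose to have the good nodes \emph{exhaustively probe} the CTVs during the discovery phases so that each good node "can infer exactly the entries $r_{GG}(c),r_{GB}(c),r_{BG}(c)$." That is not what the paper does, and it would be exponential in $n$. Under (CO8) the good nodes do not know $r(c)$; the paper instead learns the enabled set online: Scheduling optimizes over the current candidate set ${\cal C}_k$, Data Transfer executes the schedule, Verification (via the EIG Byzantine General's algorithm, not unstructured flooding) disseminates which CTVs failed, and those are pruned. Convergence follows from finiteness of the CTV set, not from up-front enumeration. Related to this, your appeal to the "law of large numbers applied to the scheduled CTVs" is unnecessary — the time-sharing schedule is deterministic once ${\cal C}_k$ has stabilized, so the achieved throughput is a convex combination, not a stochastic average. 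Finally, you identify "the principal obstacle" as a case analysis mapping each named attack to a disabled set; the paper avoids exactly this by not attributing blame for a failed CTV at all — the CTV is simply dropped, and the min-max structure of the game makes this sufficient.
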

\vspace{0.05in}

%

Some important consequences are the following.
Normally, one would expect that since the good nodes have to first declare their protocol and follow it, they can only attain ``max-min,'' which is generally smaller than min-max. Since the latter can be attained (arbitrarily closely), it
shows firstly that the bad nodes are unable to benefit from having a priori knowledge of the protocol.  Second, since all that the bad nodes can benefit from is deciding
which sets to disable, they are effectively limited to jamming and/or cooperating in each CTV.  
Other more Byzantine behaviors are not any more effective.

The example below shows why a bad node may prefer to ``conform'' rather than jam
for some utility functions. \\
\begin{example} \label{example}
Consider the network of Figure 1.
Nodes 1 and 2 are good and in close proximity,
while node 3 is bad and located far away.  Consider the ``fairness-based ''utility function 
$U(x) := \min \{ x_{12}, x_{32} \}$.  If node 3 jams, then
the connected component becomes $\{1,2\}$, and the good nodes
proceed to maximize only $x_{12}$, which node 3 can only slightly
impinge because it is so far away
from node 2.
However, if node 3 cooperates, then the connected component
is $\{1,2,3\}$, and the optimal solution for this
``fair'' utility function is to make $x_{32} = x_{12}$. However, link 32 
being weak, it requires much more airtime than link 12, thus considerably reducing $x_{12}$.
\end{example}
\begin{figure}[H] \label{fig-example}
\begin{center}
	\includegraphics[width=5cm]{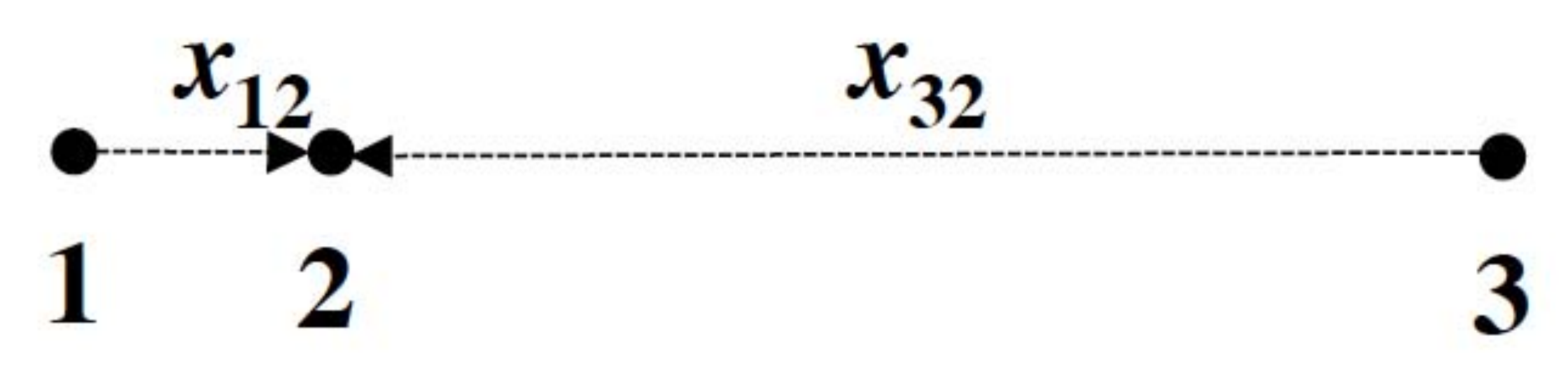}\\	

\caption{Example \ref{example}.}
\end{center}
\end{figure}

\section{The Outline of the Approach} \label{The Outline of the Approach} 
The heart of the approach is to investigate different CTVs,
exploiting the fact that the operation of the network consists of invoking
which such set to use at any given instant.
If a good node fails to receive a scheduled packet transmitted during a CTV set, then that good node alerts the rest of the network during a verification phase, and the offending CTV set is never used again.
After each such pruning the network then re-optimizes its utility over the remaining
CTVs. The decreasing sequence of remaining
sets of CTVs necessarily converges to an operational collection
of CTVs,
over which the utility is optimized by time sharing.
Since the set of disabled CTVs is determinable
by the network, as we show, it is the same as if it were revealed to the good nodes a priori,
which allows achievement of min-max.
It also shows why more complex Byzantine
behaviors than jamming or cooperating
are not any more effective for the bad nodes.

There are however several problems that lie along the way to realizing this scheme.
First, all of the above presumes that all the nodes are good, and,
second, also that the nodes know the network topology and other parameters,
both of which are false.
This leads to the challenge: How to determine the network, while under attack from bad nodes
when one does not know the network?
We present a complete protocol suite that proceeds through several phases
to achieve this end result.

After their birth, the nodes need to first discover who their neighbors are.
This requires a two-way handshake, which presents one problem already.
Two good nodes that are neighbors
can successfully send packets to each other if there are no primary (half-duplex)
or secondary (collision) conflicts.
To achieve this we employ an Orthogonal MAC Code
\cite{ponhukum12}.
Next, the two nodes need to update their clock parameters.
After this, the nodes propagate their neighborhood information
so that everyone learns about the network topology.
This also poses some challenges when there are intermediary
bad nodes. This is addressed by a
version of the Byzantine General's algorithm of \cite{BarNoy1987},
by capitalizing on connectedness assumption (C).
Next, even though all the good nodes converge to a common network view, that view may be internally inconsistent, especially with respect to clocks.
To resolve this we employ a certain consistency check algorithm.
Next, the nodes proceed to determine an optimal schedule for
time sharing over the set of CTVs that have performed
consistently from the very beginning, 
and execute it. However, a bad node that has cooperated hitherto
may not cooperate at  this point. Hence the results of this operational phase need
to be verified,
the dysfunctional
CTV pruned, the schedule re-optimized, and the procedure iterated.

The reader may wonder: Why do we even need a notion of ``time''? 
First, without it, we cannot even speak of throughput or thus of
utility.
Second, 
we use local clocks to schedule transmissions and coordinate activity
(as is  
quite 
common, e.g., time-outs in MAC and transport protocols).
On the other hand, dependence on distributed synchronized clocks for coordinated activity opens yet another avenue for bad nodes to sabotage the protocol -- interfering with  the 
clock synchronization algorithm.
Therefore, topics like scheduling, clock synchronization, 
utility maximization, and security, are deeply interwoven.
Therefore one needs a holistic approach that
addresses all these issues at every stage of the operating
lifetime,
and guarantees overall security and min-max optimality.
This is the raison d'$\rm{\hat{e}}$tre for this paper.

\section{The Phases of the Protocol Suite} \label{The Phases of the Protocol Suite} \label{sec:protocol}
The protocol suite consists of six phases: Neighbor Discovery,
Network Discovery, Consistency Check, Scheduling, Data Transfer, and Verification.
Proofs are deferred to Section \ref{Feasibility}.

We first note the necessity for
a key ingredient.
Even two good nodes that are neighbors as in assumption (C)
are only guaranteed to be able to
successfully send packets to each other
provided one is transmitting,
the other is listening (since good nodes are half-duplex), and the remaining good nodes
are all silent.
The Orthogonal MAC Code (OMC) of \cite{ponhukum12} ensures 
the simultaneity of all these events, even though the clocks
of different nodes have different skews and offsets.
For each pair of nodes $i,j$, it defines certain zero-one valued functions of local
time at each node, such that if $i$ transmits a packet of duration $W$ to $j$
at that time, then the
packet is successfully received, and the delay involved
in waiting for such an eventuality is never more than a certain
$T_{MAC}(W)$.

\subsection{The Neighbor Discovery Phase}

In this phase, each node $i$ will determine the identity and relative clock parameters of  nodes in its neighborhood ${\cal N}_{i}$, and include this data in a mutually authenticated link certificate.  


In the first two steps, each node $i$ attempts a handshake with a neighbor node $j$ by broadcasting a probe packet $PRB_{ij}$ and waiting for an acknowledgement $ACK_{ji}$.  The probe packet contains an identity certificate signed by a central authority.  Given ${\cal N}_{i}:=\{1,\ldots,n\}$\textbackslash$
i$, 
an initial candidate for the set of bidirectional neighbors of $i$ (as in (C)),
to indicate that node $i$ transmits $PRB_{ij}$ to each node $j \in {\cal N}_{i}$ via the OMC, and receives $PRB_{jj}$ from each node $j\in {\cal N}_{i}$,
we use $\text{TxRxMAC}(PRB_{i\rightarrow{\cal N}_{i}}$,$PRB_{{\cal N}_{i}\rightarrow i})$.  If a probe packet is not received from some node $j$, then $j$ is pruned from ${\cal N}_{i}$.

Next, node $i$ transmits an acknowledgment $ACK_{ij}$ to node $j$ containing a signed confirmation of the received probe packet $PRB_{j}$.  Node $i$ also listens for an acknowledgment $ACK_{ji}$ from node $j$.  Node $i$ further removes from ${\cal N}_{i}$ any nodes that failed to return acknowledgements.   

Then node $i$ transmits to each node $j\in {\cal N}_{i}$ a pair of timing packets $TIM^{(1)}_{i,j}$ and $TIM^{(2)}_{j,i}$ that contain the send-times $s^{(1)}_{ij}$ and $s^{(2)}_{ij}$ respectively as recorded by its local clock $\tau^{j}(t)$.  Node $i$ also receives a corresponding pair of timing packets $TIM^{(1)}_{i,j}$ and $TIM^{(2)}_{j,i}$ from node $j$, and records the corresponding receive-times $r^{(1)}_{ji}$ and $r^{(2)}_{ji}$ respectively, as measured by the local clock $\tau^{i}(t)$.  Any node that fails to deliver timing packets to node $i$ is further removed from ${\cal N}_{i}$.  The timing packets are used to estimate the relative skew $a_{ji}$ 
by $\hat{a}_{ji}:=\frac{r^{(2)}_{ji}-r^{(1)}_{ji}}{s^{(2)}_{ji}-s^{(1)}_{ji}}$.

The relative skew is used at the end of the Network Discovery Phase, to estimate a reference clock with respect to the local continuous-time clock.  In the last two steps, node $i$ creates a link certificate $LNK^{(1)}_{ij}$ containing the computed relative clock skew with respect to node $j$, and transmits this link to node $j$ using the OMC.  Node $i$ also listens for a similar link certificate $LNK^{(2)}_{ji}$ from node $j$.   
	
Finally, node $i$ verifies and signs the received link certificate, and transmits the authenticated version $LNK^{(2)}_{ij}$ back to node $j$.  Node $i$ listens for a similar authenticated link certificate $LNK^{(2)}_{ji}$ from $j$.  Any nodes that fail to return link certificates are removed from the set ${\cal N}_{i}$.  This set now represents the nodes in the neighborhood of node $i$ with whom node $i$ has established mutually authenticated link certificates.  
The Neighbor Discovery Phase's pseudocode is shown in Algorithm \ref{NeighborDiscoveryAlg}.

One problem is that the algorithm must be completed in a partially coordinated manner even though the nodes are asynchronous; the completion of any stage in the Exponential Information Gathering (EIG) algorithm (see below) depends on the successful completion of the previous stages by all other good nodes.  Consequently, we assign increasingly larger intervals $S_{k}:=[t_{k},t_{k+1}),k=1,\ldots6,$ to each successive protocol stage; see Section~\ref{Feasibility of Protocol and its Optimality}.        

\begin{algorithm} 
\begin{scriptsize}
	\caption{The Neighbor Discovery Phase}
	\label{NeighborDiscoveryAlg}
		\begin{algorithmic}
			\Procedure{NeighborDiscovery}{}
			\State ${\cal N}_{i}:=\{1,\ldots,n\}\setminus i$
			\While{$t\in S_{1}$}
			\State \Call{TxRxMAC}{$PRB_{i\rightarrow{\cal N}_{i}}$,$PRB_{{\cal N}_{i}\rightarrow i}$}
				\State \Call{Update}{${\cal N}_{i}$}
			\EndWhile
			\While{$t\in S_{2}$}
			\State \Call{TxRxMAC}{$ACK_{i\rightarrow{\cal N}_{i}}$,$ACK_{{\cal N}_{i}\rightarrow i}$}
							\EndWhile
			\While{$t\in S_{3}$}
			\State \Call{TxRxMAC}{$TIM^{(1)}_{i\rightarrow{\cal N}_{i}}$,$TIM^{(1)}_{{\cal N}_{i}\rightarrow i}$}
				\State \Call{Update}{${\cal N}_{i}$}
			\EndWhile
			\While{$t\in S_{4}$}
			\State \Call{TxRxMAC}{$TIM^{(2)}_{i\rightarrow{\cal N}_{i}}$,$TIM^{(2)}_{{\cal N}_{i}\rightarrow i}$}
				\State \Call{Update}{${\cal N}_{i}$}
			\EndWhile
			\While{$t\in S_{5}$}
			\State \Call{TxRxMAC}{$LNK^{(1)}_{i\rightarrow{\cal N}_{i}}$,$LNK^{(1)}_{{\cal N}_{i}\rightarrow i}$}
				\State \Call{Update}{${\cal N}_{i}$}
			\EndWhile
			\While{$t\in S_{6}$}
			\State \Call{TxRxMAC}{$LNK^{(2)}_{i\rightarrow{\cal N}_{i}}$,$LNK^{(2)}_{{\cal N}_{i}\rightarrow i}$}
				\State \Call{Update}{${\cal N}_{i}$}
			\EndWhile
			\EndProcedure
		\end{algorithmic}
\end{scriptsize}
\end{algorithm}
\subsection{The Network Discovery Phase}

The purpose of this Phase is to allow the good nodes to obtain a \emph{common} view of the network topology and \emph{consistent} estimates of all clock parameters.  To accomplish this, the good nodes must disseminate their lists of neighbors to all nodes, so that all can decide on the same topology view.  However
good nodes do not know a priori which nodes are bad, and  so bad nodes can selectively drop lists or introduce false lists to prevent consensus.  We resolve this by using a version of the Byzantine General's algorithm of \cite{BarNoy1987},
requiring an EIG tree data structure.  Let $T_{i}$ denote node $i$'s EIG tree, which by construction has depth $n$.  The root of $T_{i}$ is labelled with node $i$'s neighborhood, i.e., the nodes in ${\cal N}_{i}$ and the corresponding collection of link certificates.  First node $i$ transmits to every node $j \in {\cal N}_{j}$ in its neighborhood, the list of nodes in ${\cal N}_{i}$ and corresponding link certificates, while receiving similar lists from each node in ${\cal N}_{j}$.  Node $i$ updates its EIG tree with the newly received lists from its neighbors, by assigning each received list to a unique child vertex of the root of $T_{i}$.  Node $i$ then transmits the set of level 1 vertices of $T_{i}$ to every node in its neighborhood, receiving a set of level 1 vertcies from each neighbor in turn.  The EIG tree $T_{i}$ is updated again.  This process continues through all $n$ levels of the EIG tree.  

The notation $T^{(k)}_{i}$ in Algorithm~\ref{EIGByzMAC} indicates the $k$-level vertices of the EIG tree $T_{i}$.  The notation $\text{TxRxMAC}(T^{(k)}_{i\rightarrow{\cal N}_{i}},T^{(k)}_{{\cal N}_{i}\rightarrow i})$ indicates that, using the OMC, node $i$ transmits $T^{(k)}_{i}$ to each node $j \in {\cal N}_{i}$, and receives $T^{(k)}_{j}$ from each node $j\in {\cal N}_{i}$.

We use $\text{UPDATE}(T_{i})$, to update the EIG tree $T_{i}$ after the arrival of new information, and the procedure $\text{DECIDE}(T_{i})$ to infer the network topology based on the EIG tree.  The $n$-stage EIG algorithm guarantees that if the subgraph of good nodes is connected, then each good node will decide on the same topological view.
\begin{algorithm}
\begin{scriptsize}
	\caption{The EIG Byzantine General's Algorithm}
	\label{EIGByzMAC}
	\begin{algorithmic}
		\Procedure{EIGByzMAC}{${\cal N}_{i}$}
			\State $T^{(0)}_{i}:={\cal N}_{i}$
			\For{$k=1,\ldots n$}
				\While{$t\in S_{6+k}$}
				\State \Call{TxRxMAC}{$T^{(k)}_{i\rightarrow{\cal N}_{i}},T^{(k)}_{{\cal N}_{i}\rightarrow i}$}
					\State \Call{Update}{$T_{i}$}
				\EndWhile
			\EndFor
			\State \Call{Decide}{$T_{i}$}
		\EndProcedure
	\end{algorithmic}
\end{scriptsize}
\end{algorithm}

\subsection{The Consistency Check Phase} \label{The Consistency Check Phase}
Unfortunately, a fundamental difficulty is that malicious nodes along a path $1,\ldots,n$ may have generated false time stamps in the Neighbor Discovery Phase, and thus corrupted the measured relative skews between adjacent nodes.  
There may be several connecting paths infiltrated by bad nodes that thereby generate different values for the relative skew.  It is impossible to determine the correct path from the relative skews alone.  
Every pair of such inconsistent paths 
corresponds to an inconsistent cycle in which the skew product is not equal to one.  We use an algorithm called Consistency Check to identify the path that generated the correct relative skew.  

Consistency Check works by circling a timing packet around every cycle in which the skew product differs from one by more than $\epsilon_{a}$,
a desired maximum skew error.
At the conclusion of the test, at least one link with a malicious endpoint will be removed from the cycle, eliminating a connecting path.  During the test, each node in such a cycle is obliged to append a receive time-stamp and a send time-stamp generated by the local clock before forwarding the packet to the next node.  These time-stamps must satisfy a delay bound condition; the send time and receive time cannot differ by more than 1 clock count.  A node fails the consistency check otherwise,
or if its time stamps do not agree with its declared relative skew.  The key idea is that if the test starts after a sufficiently large amount of time has elapsed, the clock estimates based on faulty relative skews will have diverged so extensively from the actual clocks that at least one malicious node in the cycle will find it impossible to generate time-stamps that are consistent with its declared relative clock skew and satisfy the delay bound condition (all proofs are in Section \ref{Feasibility}):.  
\begin{theorem}
	\label{bigtheorem}
	Let $T_{j}$ be the start-time of the Consistency Check for the $jth$ inconsistent cycle, consisting of nodes $i_{1},\ldots,i_{m}$.  At least one malicious node in cycle $j$ will violate a consistency check condition, if $T_{j}>\frac{\hat{a}_{i_{m},i^{*}}(m+1)K+\epsilon_{b}}{\epsilon_{a}}$ where $i^{*}$ is the node with the smallest skew product $\hat{a}_{i^{*},i_{1}}$.
\end{theorem}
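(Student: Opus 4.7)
The plan is to argue by contradiction. Assume every node in the inconsistent cycle $i_1 \to i_2 \to \cdots \to i_m \to i_1$ satisfies both consistency conditions when the check is started at time $T_j$, and derive an inequality that contradicts the hypothesis $T_j > \frac{\hat{a}_{i_m,i^*}(m+1)K + \epsilon_b}{\epsilon_a}$.

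First I would fix notation. Let $r_k$ and $s_k$ denote the receive and send time-stamps recorded by node $i_k$ in its own local clock when it forwards the timing packet, with the index convention $i_{m+1} = i_1$. The two consistency conditions then read: (i) $|s_k - r_k| \le K$ (the ``one clock count'' delay bound), and (ii) the send time at $i_k$ and the receive time at $i_{k+1}$ agree through the declared relative skew, so that $r_{k+1} = \hat{a}_{i_{k+1}, i_k}\, s_k + \hat{b}_{i_{k+1}, i_k}$, where the relevant $\hat{b}$ can be read off the timing packets collected in the Neighbor Discovery Phase and is bounded via the corollary $|b_{ij}| \le a_{\max} U_0$ of (CL3).

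Next I would compose condition (ii) around the entire cycle. Iterating the $m$ affine maps yields a closing identity of the form $r_{m+1} = \rho\, s_1 + \beta$, where $\rho = \prod_{k=1}^m \hat{a}_{i_{k+1}, i_k}$ is the cycle skew product and $\beta$ is an accumulated offset that one can bound in absolute value by a constant depending only on $m$, $a_{\max}$, $U_0$, and the range of $\hat{a}$. By the very reason the cycle is being checked, $|\rho - 1| > \epsilon_a$. In parallel, I would iterate condition (i): each per-hop gap $|s_k - r_k|$ is natively measured in $i_k$'s clock, so to sum them into a quantity comparable with $r_{m+1} - s_1$ (which lives in $i_1$'s clock) one converts each gap into the reference frame of $i^*$ using the declared skews. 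Because $i^*$ is chosen as the node with smallest $\hat{a}_{i^*, i_1}$, this worst-case conversion introduces at most the factor $\hat{a}_{i_m, i^*}$ on each of the $m$ hops plus one final transit, yielding $|r_{m+1} - s_1| \le \hat{a}_{i_m, i^*}(m+1)K + (\text{bounded offsets})$.

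Finally I would combine the two expressions. From the closing identity, $|r_{m+1} - s_1| = |(\rho-1)s_1 + \beta| \ge \epsilon_a s_1 - |\beta| \ge \epsilon_a T_j - \epsilon_b$ once $\epsilon_b$ is chosen to absorb every bounded offset contribution that has appeared. Comparing with the delay-based upper bound gives
\[
\epsilon_a T_j - \epsilon_b \;\le\; \hat{a}_{i_m, i^*}(m+1)K + \epsilon_b,
\]
and solving for $T_j$ contradicts the assumed lower bound, so at least one node in the cycle must fail (i) or (ii). Only malicious nodes can be the culprit, since good nodes' declared skews are truthful and their time-stamps reflect the actual local clock.

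The main obstacle will be the clock-frame bookkeeping: the delay bounds live in each $i_k$'s own clock, while the multiplicative error $(\rho-1)s_1$ is naturally expressed in $i_1$'s clock, so the conversion chain has to be routed through $i^*$ carefully to produce exactly the factor $\hat{a}_{i_m,i^*}$ rather than a cruder bound like $(a_{\max})^m$. A secondary hurdle is packaging all the offset contributions, both those from the iterated $\hat{b}$'s and those from the a priori bound $|b_{ij}| \le a_{\max} U_0$, into the single constant $\epsilon_b$ in the theorem's statement; once that accounting is clean, the linear-in-$T_j$ lower bound dominates the constant upper bound exactly at the threshold claimed.
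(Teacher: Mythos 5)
Your approach matches the paper's: both telescope the declared affine relations around the cycle, observe that the resulting multiplicative closure error grows linearly with the start time, and bound the cumulative per-hop forwarding delays (routed through the reference node $i^*$) to derive the threshold on $T_j$. The paper spends a sequence of lemmas on exactly the clock-frame bookkeeping you flag as the main obstacle---introducing arbitrary ``left'' and ``right'' clocks for the bad nodes, then bounding the sum of link-to-link jumps by splitting the chain at $i^*$ where the forward and reverse skew products are monotone on each side---so you correctly identified where the technical effort concentrates.
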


Algorithm \ref{ConsistencyCheck} depicts Consistency Check.  Given a cycle, $j$, $k$ and $m$ denote nodes that follow and precede node $i$ respectively in the cycle.  If node $i$ is the leader of the cycle, i..e., the node with smallest ID, then node $i$ initiates the timing packet that traverses the cycle and transmits it to node $k$.  Otherwise, node $i$ waits for the timing packet to arrive from node $m$ before forwarding it to node $k$. 

\begin{algorithm}
\begin{scriptsize}
	\caption{Consistency Check Algorithm at Node $i$}
	\label{ConsistencyCheck}
	\begin{algorithmic}
		\Procedure {ConsistencyCheck}{}
		\State $START:=\frac{(n+1)(a_{max})^{n+1}+(n+1)(a_{max})^{n+1}U_{0}}{\epsilon_{a}}$
		\For{\text{each cycle} $C_{j}$}
			\State $k=$\Call{Next}{$C_{j}$}
			\State $m=$\Call{Prev}{$C_{j}$}
			\If{$i$=\Call{Leader}{$C_{j}$} and $t\geq START$}
				\State \Call{Transmit}{$TIM_{i\rightarrow k}$}
			\ElsIf{$i\in C_{j}$}
				\State \Call{Receive}{$TIM_{m\rightarrow i}$}
				\State \Call{Transmit}{$TIM_{i\rightarrow k}$}
			\EndIf
		\EndFor
		\EndProcedure
	\end{algorithmic}
\end{scriptsize}
\end{algorithm}

After all inconsistent cycles have been tested, each node $i$ disseminates the set of all timing packets ${\cal T}_{i}$ it received to other nodes.  The EIG algorithm 
is used to ensure a common view of the timing packets generated.  Each node removes from the topology any link whose endpoints generate time-stamps inconsistent with its declared relative skew or violated the delay bound.  The complete Phase is shown in Algorithm \ref{NetworkDiscovery}.
\begin{algorithm}
\begin{scriptsize}
	\caption{The Network Discovery Phase at Node $i$}
	\label{NetworkDiscovery}
	\begin{algorithmic}
		\Procedure{NetworkDiscovery}{}
		\State \Call{EIGByzMAC}{${\cal N}_{i}$}
			\State \Call{ConsistencyCheck}{}
			\State \Call{EIGByzMAC}{${\cal T}_{i}$}
		\EndProcedure
	\end{algorithmic}
\end{scriptsize}
\end{algorithm}

At the conclusion of Network Discovery Phase node $i$ shares a common view of the network topology with all other good nodes.  As a result, the network can designate the node with smallest ID as the \emph{reference clock}.  Furthermore, each node $i$ has an estimate of the reference clock $\tau^{r}_{i}(t)$ with respect to its local clock $t$ using the formula $\hat{\tau}^{r}_{i}(t):=\hat{a}_{ri}t$, where estimated $\hat{a}_{ri}$ and  actual relative skews $a_{ri}$ differ by at most $\epsilon_{a}$.         

\subsection{The Scheduling Phase}

In the Scheduling Phase the good nodes in the network obtain a common schedule governing the transmission and reception of data packets.  
A ``schedule'' is simply a sequence of CTVs, each with specified start and end times.  Each node $i$ 
divides the Data Transfer Phase into time-slots, and assigns a CTV to each time-slot so that the resulting throughput vector is utility optimal.
All the good nodes independently arrive at the same schedule since they independently optimize the same utility function over the same ${\cal C}$ (ties broken lexicographically).  

Since the good nodes must conform to a common schedule,
 each node $i$ generates a local estimate of the reference clock $\hat{\tau}^{r}_{i}(t)$ with respect to its local clock $t$, as described in the Network Discovery Phase. 
However, this estimate may not be perfectly accurate; 
some of the nodes on a path along which relative skew is estimated
may be malicious and can introduce an error of at most $\epsilon_{a}$ into the computed relative skew.  To address this, the time-slots are separated by a dead-time of size $D$, where given any pair of nodes $(i,j)$, $D$ is chosen to 
satisfy
$|\hat{\tau}^{r}_{i}(t)-\hat{\tau}^{r}_{j}(\tau^{j}_{i}(t))|\leq D$.  

Finally, $n^{2}(n-1)$ time-slots are enough to guarantee that every pair of nodes can communicate once in either direction, via multihop routing, during Data Transfer Phase.  
The algorithm $\text{UtilityMaximization}({\cal C})$ for the Scheduling Phase is depicted in Algorithm \ref{Scheduling}.            
At the end
 of Scheduling Phase, node $i$ shares a common utility maximizing schedule with other good nodes.

\begin{algorithm}
\begin{scriptsize}
	\caption{The Scheduling Phase at Node $i$}
	\label{Scheduling}
	\begin{algorithmic}
		\Procedure{Scheduling}{}
		\State \Call{UtilityMaximization}{${\cal C}$}
		\EndProcedure
	\end{algorithmic}
\end{scriptsize}
\end{algorithm}

\subsection{The Data Transfer Phase}

In this Phase the nodes exchange data packets using the generated schedule.  It is divided into time-slots, with each assigned a CTV, a rate vector, and set of packets for each transmitter in the set.  
To prevent collisions resulting from two nodes assigning themselves
to different time slots due to timing error, node $i$ begins transmission $D$ time-units after the start of the time-slot.  The transmitted packet is then guaranteed to arrive at the receiver in the same time slot, for
appropriate choice of $D$ and  time-slot size $B_{slot}$.

Algorithm 7 defines this phase, with $m_{k}$ denoting a message to be transmitted or received by node $i$ in the $k_{th}$ slot, $T_{start}$ the start time of the phase measured by the local estimate of the reference clock $\hat{\tau}^{r}_{i}(t))$, $S_{k}=[t_{k},t_{k+1}),k=1,\ldots,N$ the time-slots of the phase with $N=n^{2}(n-1)$, $t_{1}=T_{start}$, and $t_{k+1}:=t_{k}+B_{slot}+2D$, and  $TX(k)$ and $RX(k)$  the CTV, and receiving nodes during slot $k$.      

\begin{algorithm} \label{Data Transfer Algorithm}
\begin{scriptsize}
	\caption{The Data Transfer Phase at Node $i$}
	\begin{algorithmic}
		\Procedure{DataTransfer}{$T_{start}$}
		\For{k=1,\ldots,N}
			\If{$t\in S_{k}$ and $t\geq t_{k}+D$ and $i\in TX(k)$}
				\State\Call{Transmit}{$m_{k}$}
			\ElsIf{$t\in S_{k}$ and $i\in RX(k)$}
				\State\Call{Receive}{$m_{k}$}
			\EndIf
		\EndFor
		\EndProcedure
	\end{algorithmic}
\end{scriptsize}
\end{algorithm}

\subsection{The Verification Phase}

However, malicious nodes may not cooperate in the Data Transfer Phase.
So whenever a scheduled packet fails to arrive at node $j$, it adds the offending CTV and associated packet number to a list, and disseminates the list in the Verification Phase
using the EIG Byzantine General's algorithm.
These CTVs are then permanently further pruned from the collection of feasible CTVs.  With $L_{k}$ denoting the list that failed during the $k$th iteration of the Data Transfer Phase,
the set ${\cal C}_{k}$ of feasible CTVs during the $k$th iteration of the Scheduling Phase is updated to ${\cal C}_{k+1} = {\cal C}_{k} \setminus L_{k}  $
in Algorithm \ref{Verification}.  

All communication can be scheduled into slots separated by a dead-time of $2D$.  Within each of the $n$ stages of the EIG Byzantine General's algorithm, there are $n(n-1)$ pairs of nodes that may communicate, and at most $n$ nodes on the connecting path.  Therefore, the total number of time slots required is $n^{3}(n-1)$.


At the conclusion of the phase, the good nodes again share a common view of the set of feasible CTVs for the next iteration of the Scheduling Phase.  
\begin{algorithm}
\begin{scriptsize}
	\caption{The Verification Phase at Node $i$}
	\label{Verification}
	\begin{algorithmic}
	  \Procedure{Verification}{}
		\State \Call{EIGByz}{$L_{k}$}
		\State \Call{Update}{${\cal C}_{k+1}$} 
		\EndProcedure
	\end{algorithmic}
\end{scriptsize}
\end{algorithm}

\subsection{The Steady State}
The network cycles through Scheduling, Data Transfer, and Verification Phases for $n_{iter}$ iterations.  
    Eventually, by finiteness, it converges to a set of CTVs, and a utility-maximizing schedule over it.
     The overall protocol is in Algorithm \ref{protocol}.     

\begin{algorithm}
\begin{scriptsize}
	    \caption{The Complete Protocol}
	    \label{protocol}
	    \begin{algorithmic}
		    \State \Call{NeighborDiscovery}{}
		    \State \Call{NetworkDiscovery}{}
		    \For{$k=1,\ldots,n_{iter}$}
		    	\State \Call{Scheduling}{${\cal C}_{k}$}
		    	\State \Call{DataTransfer}{t}
		    	\State \Call{Verification}{}
		    \EndFor
	    \end{algorithmic}
\end{scriptsize}
    \end{algorithm}

\section{Feasibility of Protocol and Optimality Proof} \label{Feasibility of Protocol and its Optimality}
    \label{Feasibility}
   For the distributed wireless nodes to exchange data over the network,
they must not only have the same topological view, in order to independently arrive at a common schedule, but they must also have a consistent view of a reference clock so that any activity will conform to this common schedule.  
 For this, we consider the consistency check algorithm
of Section~\ref{The Consistency Check Phase}.

Consider a chain network $1,\ldots,n$, where the endpoints, nodes $1$ and  $n$ are good, and the intermediate nodes $2,\ldots,n-1$ are bad.  Note that this network can also be reduced to a cycle of size $n-1$ by making both endpoints the same node.  We  assume that the two good endpoints do not know if any of the intermediate nodes are bad.    

Now suppose that each pair of adjacent nodes $(i,i-1)$ for $i=2,\ldots,n$ has declared a set of relative skews and offsets $\{\hat{a}_{i,i-1},\hat{b}_{i,i-1}\}$, and that each node in the chain knows this set.  The two good nodes wish to determine whether the declared skews are accurate, i.e., whether $a_{n,1}=\prod^{n}_{i=2}\hat{a}_{i,i-1}$. Unfortunately, the good nodes have no way of directly measuring $a_{n,1}$.  The estimate of $a_{n,1}$ is obtained from the skew product itself, which is the very quantity that needs to be verified.  So, instead, the good nodes carry out the consistency check described earlier.  After waiting a sufficiently long time, node $1$ initiates a timing packet that traverses the chain from left to right.  Each node in the chain is obligated to forward the packet after appending receive and time-stamps that satisfy the skew consistency and delay bound conditions.

In order to defeat this test, the bad nodes, having collectively declared a false set of relative skews and offsets, must support two sets of clocks for each node $i\in\{2,\ldots,n\}$: a ``left'' clock $\tau^{i,l}(t)$ to generate receive time-stamps, and a ``right'' clock $\tau^{i,r}(t)$ to generate send-time stamps.  Unlike the clocks of the good nodes, the left and right clocks of the bad nodes need not be affine with respect to the global reference clock.  In fact, the bad nodes are free to jointly select any set of clocks $\{\tau^{i,l}(t),\tau^{i,r}(t),\forall i=2,\ldots,n-1\}$ that are arbitrary functions of $t$, a much larger set than the affine clocks being emulated.  However, we will show that if node $1$ waits sufficiently long enough, there is no set of clocks $\{\tau^{i,l}(t),\tau^{i,r}(t),i=2,\ldots,n-1\}$ that can generate time-stamps which satisfy both conditions of the consistency check.

Let $r_{i,i-1}$ and $s_{i,i+1}$ denote the receive and send time-stamps generated by a bad node $i$ with respect to the left and right clocks $\tau^{i,l}(t)$ and $\tau^{i,r}(t)$ respectively.  Let $t_{i,l}$ and $t_{i,r}$ denote the time with respect to the global reference clock at which the receive and send time-stamps are generated at node $i$.  We have $r_{i-1,i} :=\tau^{i,l}(t_{i,l})$ and $s_{i,i+1} :=\tau^{i,r}(t_{i,r})$.
Let $t_{1}$ and $t_{n}$ denote the time with respect to the global reference clock at which the timing packet was transmitted by node $1$ and received by node $n$ respectively.  We have $s_{1,2} :=\tau^{1}(t_{1}), _{n-1,n} :=\tau^{n}(t_{n})$.
To simplify notation we will define left and right clocks at the endpoints so that
$t_{1,r} :=t_{1}, t_{n,l} :=t_{n}$ and $\tau^{1,r}(t_{1,r}) :=\tau^{1}(t_{1}), \tau^{n,l}(t_{n,l}) :=\tau^{n}(t_{n})$.

In order to prove that both conditions of the consistency check cannot be satisifed by any set of clocks $\{\tau^{i,l}(t),\tau^{i,r}(t),i=2,\ldots,n-1\}$, we will assume that the first condition is satisfied, and show that second must fail.  Therefore, the clocks must satisfy:
\begin{align}
	\label{linearleftright}
	\tau^{i,l}(t_{i,l})&=a_{i,i-1}\tau^{i-1,r}(t_{i-1,r})+b_{i,i-1} \mbox{ for } i \leq 2 \leq n.
\end{align}
In addition, by virtue of causality, we also have:
\begin{align}
	\label{causalleftright}
	\tau^{i,l}(t_{i,l})&\leq\tau^{i,r}(t_{i,r}).
\end{align}
We prove that delay bound condition must be violated if node $1$ waits for a sufficiently large period of time before before initiating the timing packet, i.e., if $\tau^{1}(t_{1})$ is sufficiently large, then for some $i$, we have $\tau^{i,r}(t_{i,r})-\tau^{i,l}(t_{i,l})>K$.  More precisely, we show 
$\sum^{n-1}_{i=2}\left(\tau^{i,r}(t_{i,r})-\tau^{i,l}(t_{i,l})\right)>nK$,
which implies that some node has violated delay bound condition.  

The sum $\sum^{n-1}_{i=2}\left(\tau^{i,r}(t_{i,r})-\tau^{i,l}(t_{i,l})\right)$ cannot be directly evaluated because the left and right clocks $\{\tau^{i,l}(t),\tau^{i,r}(t)\}$ are arbitrary functions of $t$. However, we have the following equality by repeated addition and subtraction $\tau^{n,l}(t_{n,l}) =\tau^{1,r}(t_{1,r})+\sum^{n}_{i=2}\left(\tau^{i,l}(t_{i,l})-\tau^{i-1,r}(t_{i-1,r})\right) = \sum^{n-1}_{i=2}\left(\tau^{i,l}(t_{i,l})-\tau^{i-1,r}(t_{i-1,r})\right) =\tau^{1,r}(t_{1,r})+S_{1}+S_{2}$,
where $S_{1} :=\sum^{n}_{i=2}\left(\tau^{i,l}(t_{i,l})-\tau^{i-1,r}(t_{i-1,r})\right), S_{2} :=\sum^{n-1}_{i=2}\left(\tau^{i,l}(t_{i,l})-\tau^{i-1,r}(t_{i-1,r})\right)$.
The value $S_{2}$ is the sum of the forwarding delays.  We will use (\ref{linearleftright}) and (\ref{causalleftright}) to obtain an upper bound on $S_{1}$.  Inserting this upper bound and using the fact that $\tau^{n,l}(t)$ and $\tau^{1,r}(t)$ are both affine functions of $t$, will allow us to obtain a lower bound on $S_{2}$.  The proof will then follow easily.  
We now obtain an upper bound on $S_{1}$ when the forward skew product $\prod^{j}_{i=2}\hat{a}_{i,i-1} \geq 1$ for all $j\geq2$. 

\begin{lemma}
\label{istarmaxsumlemma}
Suppose $\prod^{j}_{i=2}a_{i,i-1}\geq 1$ for $2\leq i\leq n$.  Then
$ \sum^{n}_{i=2}(\tau^{i,l}(t_{i,l})-\tau^{i-1,r}(t_{i-1,r}))\leq\left(\frac{\hat{a}_{n,1}-1}{\hat{a}_{n,1}}\right)\tau^{n,l}(t_{n,l})
 \sum^{n}_{i=2}\frac{\hat{b}_{i,i-1}}{\hat{a}_{i,1}}$.

\end{lemma}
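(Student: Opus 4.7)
The plan is to unfold the recursion (\ref{linearleftright}) from node $1$ to node $n$, carrying along the non-negative forwarding delays $\delta_i := \tau^{i,r}(t_{i,r}) - \tau^{i,l}(t_{i,l}) \geq 0$ (non-negativity by (\ref{causalleftright})), obtain a closed-form expression for $\tau^{n,l}(t_{n,l})$, and then isolate $S_1$ using the telescoping identity already noted in the text.

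First, combining $\tau^{i,r}(t_{i,r}) = \tau^{i,l}(t_{i,l}) + \delta_i$ with (\ref{linearleftright}) yields the single right-clock recurrence $\tau^{i,r}(t_{i,r}) = \hat{a}_{i,i-1}\tau^{i-1,r}(t_{i-1,r}) + \hat{b}_{i,i-1} + \delta_i$ for $2 \leq i \leq n-1$. Iterating this from $i=2$ up through $i = n-1$ and applying (\ref{linearleftright}) once more to pass from $\tau^{n-1,r}$ to $\tau^{n,l}$ gives, after setting $\hat{a}_{k,1} := \prod_{j=2}^{k}\hat{a}_{j,j-1}$,
\begin{align*}
\tau^{n,l}(t_{n,l}) = \hat{a}_{n,1}\tau^{1,r}(t_{1,r}) + \sum_{k=2}^{n}\frac{\hat{a}_{n,1}}{\hat{a}_{k,1}}\hat{b}_{k,k-1} + \sum_{k=2}^{n-1}\frac{\hat{a}_{n,1}}{\hat{a}_{k,1}}\delta_k.
\end{align*}

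Next, from the telescoping identity $\tau^{n,l}(t_{n,l}) - \tau^{1,r}(t_{1,r}) = S_1 + S_2$ (with $S_2 = \sum_{k=2}^{n-1}\delta_k$), I subtract the displayed expression for $\tau^{n,l}(t_{n,l})$ from both sides, divide through by appropriate powers of $\hat{a}_{n,1}$, and solve for $S_1$. After simplification this gives
\begin{align*}
S_1 = \frac{\hat{a}_{n,1}-1}{\hat{a}_{n,1}}\tau^{n,l}(t_{n,l}) + \sum_{k=2}^{n}\frac{\hat{b}_{k,k-1}}{\hat{a}_{k,1}} + \sum_{k=2}^{n-1}\left(\frac{1}{\hat{a}_{k,1}}-1\right)\delta_k.
\end{align*}

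Finally, I invoke the hypothesis $\hat{a}_{k,1} \geq 1$ for every $2 \leq k \leq n$: each $\delta_k \geq 0$, and each factor $\tfrac{1}{\hat{a}_{k,1}} - 1$ is non-positive, so the last sum is $\leq 0$. Dropping it yields the claimed upper bound. The only delicate step is the index-bookkeeping when unfolding the recurrence (getting the ratio $\hat{a}_{n,1}/\hat{a}_{k,1}$ in the right place and not double-counting the endpoint term $k=n$, for which $\delta_n$ does not appear); the rest is linear substitution and a sign check. I expect the iteration algebra to be the main source of potential slips, but conceptually nothing beyond (\ref{linearleftright}), (\ref{causalleftright}), and the monotonicity hypothesis on the declared skew product is needed.
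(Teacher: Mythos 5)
Your proof is correct. Unrolling the right-clock recurrence
\begin{equation*}
\tau^{i,r}(t_{i,r}) = \hat{a}_{i,i-1}\tau^{i-1,r}(t_{i-1,r}) + \hat{b}_{i,i-1} + \delta_i
\end{equation*}
gives exactly the closed-form for $\tau^{n,l}(t_{n,l})$ you state, and substituting into the telescoping identity $S_1 = \tau^{n,l}(t_{n,l}) - \tau^{1,r}(t_{1,r}) - S_2$ yields the exact identity
\begin{equation*}
S_1 = \frac{\hat{a}_{n,1}-1}{\hat{a}_{n,1}}\,\tau^{n,l}(t_{n,l}) + \sum_{k=2}^{n}\frac{\hat{b}_{k,k-1}}{\hat{a}_{k,1}} + \sum_{k=2}^{n-1}\Bigl(\tfrac{1}{\hat{a}_{k,1}}-1\Bigr)\delta_k ,
\end{equation*}
after which the hypothesis $\hat{a}_{k,1}\geq 1$ together with $\delta_k \geq 0$ lets you drop the last sum. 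The paper instead argues by induction on the chain length: it starts from the $n=2$ identity, assumes the bound for length $n$, splits off the new term $\tau^{n+1,l}-\tau^{n,r}$, applies the causality inequality $\tau^{n,l}\leq\tau^{n,r}$ once to the accumulated bound (using that the coefficient $(\hat{a}_{n,1}-1)/\hat{a}_{n,1}$ is non-negative), and then substitutes $\tau^{n,r}=(\tau^{n+1,l}-\hat{b}_{n+1,n})/\hat{a}_{n+1,n}$ to close the step. These are two presentations of the same linear-recurrence argument: your version solves the recurrence exactly and applies the inequalities once at the end, which has the virtue of exhibiting the precise slack term $\sum_k(1-1/\hat{a}_{k,1})\delta_k$; the paper's inductive version introduces the slack incrementally and never names the delays $\delta_k$ explicitly. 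Both invoke exactly (\ref{linearleftright}), (\ref{causalleftright}), and the skew-product hypothesis, so there is no gap.
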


\begin{proof}
    We have by definition $\tau^{n+1,l}(t_{n,l}):=\hat{a}_{n+1,n}\tau^{n,r}(t_{n,r})+\hat{b}_{n+1,n}$. For $n=2$, we have $\tau^{2,l}(t_{2,l})-\tau^{1,r}(t_{1,r}) =\left(\frac{a_{2,1}-1}{a_{2,1}}\right)\tau^{2,l}(t_{2,l})+\frac{b_{2,1}}{a_{2,1}}$.
    Now assume the lemma is true for $n$.  We will show that it also holds for $n+1$:
$\sum^{n+1}_{i=2}(\tau^{i,l}(t_{i,l})-\tau^{i-1,r}(t_{i-1,r}))
		=\sum^{n}_{i=2}(\tau^{i,l}(t_{i,l})-\tau^{i-1,r}(t_{i-1,r}))+\tau^{n+1,l}(t_{n+1,l})-\tau^{n,r}(t_{n,r})
\leq\left(\frac{\hat{a}_{n,1}-1}{\hat{a}_{n,1}}\right)\tau^{n,l}(t_{n,l})+\sum^{n}_{i=2}\frac{\hat{b}_{i,i-1}}{\hat{a}_{i,1}}+\tau^{n+1,l}(t_{n+1,l})-\tau^{n,r}(t_{n,r})
\leq\left(\frac{\hat{a}_{n,1}-1}{\hat{a}_{n,1}}\right)\tau^{n,r}(t_{n,r})+\sum^{n}_{i=2}\frac{\hat{b}_{i,i-1}}{\hat{a}_{i,1}}+\tau^{n+1,l}(t_{n+1,l})-\tau^{n,r}(t_{n,r})
=\left(\frac{\hat{a}_{n,1}-1}{\hat{a}_{n,1}}\right)\left(\frac{\tau^{n+1,l}(t_{n+1,l})-\hat{b}_{n+1,n}}{\hat{a}_{n+1,n}}\right)+\sum^{n}_{i=2}\frac{\hat{b}_{i,i-1}}{\hat{a}_{i,1}}
+\left(\frac{\hat{a}_{n+1,n}-1}{\hat{a}_{n+1,n}}\right)\hat{\tau}^{n+1,l}(t_{n+1,l})+\frac{\hat{b}_{n+1,n}}{\hat{a}_{n+1,n}}
 =\left(\frac{\hat{a}_{n+1,1}-1}{\hat{a}_{n+1,1}}\right)\tau^{n+1,l}(t_{n+1,l})+\sum^{n+1}_{i=2}\frac{\hat{b}_{i,i-1}}{\hat{a}_{i,1}}$,
which follow from the induction hypothesis above in the Lemma statement, and
the fact that $\tau^{n,r}(t_{n,r})\geq\tau^{n,l}(t_{n,l})$ and $a_{i,1}\geq1$ for all $2\leq i\leq n+1$ (that is, the coefficient $\hat{a}_{i,1}-1$ is negative).
\end{proof}

We next bound $S_{1}$
in the special case 
when the reverse skew product $\prod^{j}_{i=1}\hat{a}_{n-(i-1),n-i} \leq 1$ for all $j\geq1$.

\begin{lemma}
    \label{istarminsumlemma}
    Suppose $\prod^{j}_{i=1}a_{n-(i-1),n-i}\leq1$ for $2\leq j\leq n-1$.  Then
$ \sum^{j}_{i=1}(\tau^{n-(i-1),l}(t_{n-(i-1),l})-\tau^{n-i,r}(t_{n-i,r}))$
$ \leq\left(\hat{a}_{n,n-j}-1\right)\tau^{n-j,r}(t_{n-j,r})+\hat{b}_{n,n-1}+\sum^{n-1}_{i=n-j+1}\hat{a}_{n,i}\hat{b}_{i,i-1}$.
\end{lemma}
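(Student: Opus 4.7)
\emph{Proof plan.} The plan is to mirror the induction that establishes Lemma~\ref{istarmaxsumlemma}, but pushed from the right endpoint $n$ leftward rather than from node $1$ rightward. The crucial sign flip relative to Lemma~\ref{istarmaxsumlemma} is supplied by the new hypothesis $\prod^{j}_{i=1}\hat{a}_{n-(i-1),n-i}\leq 1$, which is equivalent to $\hat{a}_{n,n-j}\leq 1$ for the relevant range of $j$. The proof structure will therefore be induction on $j$, with the same two building blocks used before: the linearity relation (\ref{linearleftright}) and the causality relation (\ref{causalleftright}).

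In the base case $j=1$, applying (\ref{linearleftright}) at $i=n$ gives $\tau^{n,l}(t_{n,l})-\tau^{n-1,r}(t_{n-1,r})=(\hat{a}_{n,n-1}-1)\tau^{n-1,r}(t_{n-1,r})+\hat{b}_{n,n-1}$, which matches the claim since the asserted sum is empty. For the inductive step at $j+1$, I would append the new term $\tau^{n-j,l}(t_{n-j,l})-\tau^{n-j-1,r}(t_{n-j-1,r})$ to the partial sum and invoke the inductive hypothesis. The hypothesis contributes a term $(\hat{a}_{n,n-j}-1)\tau^{n-j,r}(t_{n-j,r})$ whose coefficient is now non-positive by the reverse-skew assumption, so causality (\ref{causalleftright}) in the form $\tau^{n-j,r}(t_{n-j,r})\geq\tau^{n-j,l}(t_{n-j,l})$ yields $(\hat{a}_{n,n-j}-1)\tau^{n-j,r}(t_{n-j,r})\leq(\hat{a}_{n,n-j}-1)\tau^{n-j,l}(t_{n-j,l})$. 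A second application of (\ref{linearleftright}) rewrites $\tau^{n-j,l}(t_{n-j,l})$ as $\hat{a}_{n-j,n-j-1}\tau^{n-j-1,r}(t_{n-j-1,r})+\hat{b}_{n-j,n-j-1}$; collecting coefficients of $\tau^{n-j-1,r}(t_{n-j-1,r})$ and using the skew product identity $\hat{a}_{n,n-j}\hat{a}_{n-j,n-j-1}=\hat{a}_{n,n-j-1}$ collapses them to $\hat{a}_{n,n-j-1}-1$, while the offset contributions telescope so that the newly generated $\hat{a}_{n,n-j}\hat{b}_{n-j,n-j-1}$ extends the summation range from $i=n-j+1$ down to $i=n-j$, producing the claimed $(j+1)$-form verbatim.

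The main obstacle, and the only point that is genuinely different from the proof of Lemma~\ref{istarmaxsumlemma}, is getting the direction of causality correct. In Lemma~\ref{istarmaxsumlemma} the forward-skew hypothesis makes the analogous coefficient non-negative and one replaces a left clock by a right clock; here the reverse-skew hypothesis flips that sign and forces the opposite substitution (right by left) in order to preserve the ``$\leq$''. Once this orientation is set and combined with the telescoping of cumulative skews $\hat{a}_{n,\cdot}$ along the $n\to 1$ path, the remaining manipulations are routine and align exactly with the target expression.
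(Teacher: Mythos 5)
Your proposal is correct and mirrors the paper's own proof: the same induction on $j$, the same base case, and the same inductive step applying the declared-skew relation (\ref{linearleftright}) and causality (\ref{causalleftright}) with the sign of $\hat{a}_{n,n-j}-1$ now non-positive so that causality replaces $\tau^{n-j,r}$ by $\tau^{n-j,l}$ (the reverse substitution from Lemma~\ref{istarmaxsumlemma}), after which the skew telescope $\hat{a}_{n,n-j}\hat{a}_{n-j,n-j-1}=\hat{a}_{n,n-j-1}$ and the offset collection give the $(j+1)$-form. You also correctly retain the $\hat{b}_{n,n-1}$ term in the base case, which the paper's write-up of the $j=1$ case appears to drop by typo.
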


\begin{proof}
    We have by definition $\tau^{n-(k-1),l}(t_{n-(k-1),l}):=\hat{a}_{n-(k-1),n-k}\tau^{n-k,r}(t_{n-k,r})+\hat{b}_{n-(k-1),n-k}$.  For $j=1$,
$ \tau^{n,l}(t_{n,l})-\tau^{n-1,r}(t_{n-1,r})=(a_{n,n-1}-1)\tau^{n-1,r}(t_{n-1,r})$.

    Now assume the Lemma holds for $j$.  We will show that it must hold for $j+1$:
$\sum^{j+1}_{k=1}(\tau^{n-(k-1),l}(t_{n-(k-1),l})-\tau^{n-k,r}(t_{n-k,r}))
=\sum^{j}_{k=1}(\tau^{n-(k-1),l}(t_{n-(k-1),l})-\tau^{n-k,r}(t_{n-k,r}))
+\tau^{n-j,l}(t_{n-j,l})-\tau^{n-(j+1),r}(t_{n-(j+1),r})
\leq(\hat{a}_{n,n-j}-1)\tau^{n-j,r}(t_{n-j,r})+\hat{b}_{n,n-1}
 +\sum^{n-1}_{k=n-j+1}\hat{a}_{n,k}\hat{b}_{k,k-1}
+\tau^{n-j,l}(t_{n-j,l})-\tau^{n-(j+1),r}(t_{n-(j+1),r})
\leq(\hat{a}_{n,n-j}-1)\tau^{n-j,l}(t_{n-j,l})+\hat{b}_{n,n-1}
 +\sum^{n-1}_{k=n-j+1}\hat{a}_{n,k}\hat{b}_{k,k-1}
 +\tau^{n-j,l}(t_{n-j,l})-\tau^{n-(j+1),r}(t_{n-(j+1),r})
 \leq(\hat{a}_{n,n-(j+1)}-1)\tau^{n-(j+1),r}(t_{n-(j+1),r})+\hat{b}_{n,n-1}
 +\sum^{n-1}_{k=n-j}\hat{a}_{n,k}\hat{b}_{k,k-1}$.
The above follow from induction hypothesis in Lemma \ref{istarminsumlemma}, 
since $\tau^{i,l}(t_{i,l})\leq\tau^{i,r}(t_{i,r})$ and $\hat{a}_{n,n-j}\leq1$ for $1\leq j\leq n-1$ (that is, the coefficient $\hat{a}_{n,n-j}-1$ is negative), and from substitution into $\tau^{n-j,l}(t_{n-j,l})$ and simplification.  
    \end{proof}

    We will combine both special cases in Lemma \ref{istarmaxsumlemma} and Lemma \ref{istarminsumlemma} to obtain an upper bound on $S_{1}$.  First we define $i^{*}$ as the node with the smallest skew product $\hat{a}_{i^{*},1}$ in the chain network, that is less than one.  That is, $\hat{a}_{i^{*},1}=\displaystyle\min_{k}\hat{a}_{k,1}$ and $\hat{a}_{i^{*},1}\leq1$.  If no such node exists, set $i^{*}=1$. 

  Now we consider an arbitrary set of skews $\{\hat{a}_{i,i-1},i=2,\ldots,n\}$.  Next we show that if $i^{*}\geq2$ then the forward skew product starting from $i^{*}$ is greater than 1, and the reverse skew product starting from $i^{*}-1$ is always less than one.

\begin{lemma}
\label{lemmaistar}
If $i^{*}\geq2$ then $\hat{a}_{j,i^{*}}\geq1$ for $i^{*}+1\leq j\leq n$ and $\hat{a}_{i^{*},i^{*}-k+1}\leq1$ for $1\leq k\leq i^{*}$.  Otherwise, $\hat{a}_{j,1}\geq1$ for $2\leq j\leq n$.
\end{lemma}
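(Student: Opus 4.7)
The plan is to reduce the lemma to a simple ratio identity for the declared skew products and then read off both inequalities from the definition of $i^{*}$ as a minimizer. Recall that by construction the declared pairwise skews multiply along the chain, i.e., for $j>k$ one has $\hat{a}_{j,k}=\prod_{i=k+1}^{j}\hat{a}_{i,i-1}$ and therefore
\begin{equation*}
\hat{a}_{j,k} \;=\; \frac{\hat{a}_{j,1}}{\hat{a}_{k,1}},
\end{equation*}
while for $j<k$ the convention $\hat{a}_{j,k}=1/\hat{a}_{k,j}$ yields the same ratio formula. Since all skews are strictly positive by (CL3), these ratios are well defined. I would begin the proof by stating this identity as the key reduction; it is essentially bookkeeping, but it is the single fact that makes the rest immediate.

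Assume first that $i^{*}\geq 2$. By definition, $\hat{a}_{i^{*},1}=\min_{k}\hat{a}_{k,1}$. Then for any $j$ with $i^{*}+1\leq j\leq n$,
\begin{equation*}
\hat{a}_{j,i^{*}} \;=\; \frac{\hat{a}_{j,1}}{\hat{a}_{i^{*},1}} \;\geq\; 1,
\end{equation*}
which gives the first assertion. For the second assertion, set $m:=i^{*}-k+1$, so that $1\leq m\leq i^{*}$; again by the minimality of $\hat{a}_{i^{*},1}$ we have $\hat{a}_{m,1}\geq\hat{a}_{i^{*},1}$, and therefore
\begin{equation*}
\hat{a}_{i^{*},m} \;=\; \frac{\hat{a}_{i^{*},1}}{\hat{a}_{m,1}} \;\leq\; 1.
\end{equation*}

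For the remaining case, if $i^{*}=1$, then by the definition of $i^{*}$ no index $k\geq 2$ satisfies $\hat{a}_{k,1}\leq 1$ (otherwise such a $k$ would have been chosen as $i^{*}$), so $\hat{a}_{j,1}\geq 1$ for every $2\leq j\leq n$, which is exactly the ``otherwise'' clause.

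I do not anticipate a genuine obstacle here: once the ratio identity is in place the lemma is a one-line consequence of the minimization. The only subtlety worth flagging explicitly is the direction of the inequality when $k$ indexes a \emph{reverse} traversal $i^{*},i^{*}-1,\ldots,1$, since the second claim involves $\hat{a}_{i^{*},i^{*}-k+1}$, a skew running backwards along the chain. The ratio form absorbs this asymmetry cleanly, so no case analysis on chain direction is needed beyond noting the convention $\hat{a}_{j,k}=1/\hat{a}_{k,j}$.
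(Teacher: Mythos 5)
Your argument is correct and matches the paper's proof in substance: both reduce the claim to the multiplicativity of declared skews (your $\hat{a}_{j,k}=\hat{a}_{j,1}/\hat{a}_{k,1}$ is exactly the identity $\hat{a}_{j,1}=\hat{a}_{j,k}\hat{a}_{k,1}$ the paper uses) and then read off the inequalities from the minimality of $\hat{a}_{i^{*},1}$. The only cosmetic difference is that you argue directly while the paper argues by contradiction, so there is nothing to add.
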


\begin{proof}
	Consider $i^{*}\geq2$, and suppose the first part of the assertion is false.  I.e., for some $j^{\prime}$, $\hat{a}_{j^{\prime}i^{*}}<1$.  It follows that $\hat{a}_{j^{\prime}1}=\hat{a}_{j^{\prime}i^{*}}\hat{a}_{i^{*}1}\leq\hat{a}_{i^{*}1}$.  But then $j^{\prime}$ is a node with a smaller skew product $\hat{a}_{j1}$ than node $i^{*}$, which contradicts the definition of $i^{*}$.  Now suppose that the second part of the assertion is false.  I.e., for some $j^{\prime}$ we have $\hat{a}_{i^{*}j^{\prime}}>1$.  It follows that $\hat{a}_{i^{*}1}=\hat{a}_{i^{*}j^{\prime}}\hat{a}_{j^{\prime}1}\geq\hat{a}_{j^{\prime}1}$.  But then $j^{\prime}$ is a node with a smaller skew product than node $i^{*}$, which again contradicts the definition of $i^{*}$.  Now consider the case when $i^{*}=1$.  Then by definition of $i^{*}$ it follows that $\hat{a}_{j1}\geq1$ for all $2\leq j\leq n$.
\end{proof}

We now obtain an upper bound on $S_{1}$ for arbitrary skews.
\begin{lemma} \label{allskewlemma}
Suppose $i^{*}\geq2$.  We have the following inequality:
$\sum^{n}_{j=2}\tau^{j,l}(t_{j,l})-\tau^{j-1,r}(t_{j-1,r})\leq(\hat{a}_{i^{*},1}-1)\tau^{1,r}(t_{1,r})$
$ +\left(\frac{\hat{a}_{n,i^{*}}-1}{\hat{a}_{n,i^{*}}}\right)\tau^{n,l}(t_{n,l})+\frac{\hat{b}_{n,1}}{\hat{a}_{n,i^{*}}}$.
\end{lemma}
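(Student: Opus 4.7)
The plan is to split the sum $\sum_{j=2}^{n}\bigl(\tau^{j,l}(t_{j,l})-\tau^{j-1,r}(t_{j-1,r})\bigr)$ at the distinguished index $i^{*}$, writing it as a lower tail $\sum_{j=2}^{i^{*}}$ plus an upper tail $\sum_{j=i^{*}+1}^{n}$, and then invoking Lemma \ref{istarmaxsumlemma} on the upper tail and Lemma \ref{istarminsumlemma} on the lower tail. The point of this decomposition is that Lemma \ref{lemmaistar} fixes the signs of exactly the right skew products on each piece: on the sub-chain $i^{*},i^{*}+1,\ldots,n$ the forward skew product satisfies $\hat{a}_{j,i^{*}}\geq 1$, which is the hypothesis of Lemma \ref{istarmaxsumlemma}, and on the sub-chain $1,2,\ldots,i^{*}$ the reverse skew product satisfies $\hat{a}_{i^{*},i^{*}-k+1}\leq 1$, which is the hypothesis of Lemma \ref{istarminsumlemma}.

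For the upper tail I would re-label the sub-chain $i^{*},i^{*}+1,\ldots,n$ so that $i^{*}$ plays the role of node $1$ in Lemma \ref{istarmaxsumlemma}; applying that lemma yields
$$\sum_{j=i^{*}+1}^{n}\bigl(\tau^{j,l}(t_{j,l})-\tau^{j-1,r}(t_{j-1,r})\bigr)\leq \frac{\hat{a}_{n,i^{*}}-1}{\hat{a}_{n,i^{*}}}\tau^{n,l}(t_{n,l})+\sum_{i=i^{*}+1}^{n}\frac{\hat{b}_{i,i-1}}{\hat{a}_{i,i^{*}}}.$$
For the lower tail I would apply Lemma \ref{istarminsumlemma} to the sub-chain $1,2,\ldots,i^{*}$ with the identifications $n\leftrightarrow i^{*}$ and $j\leftrightarrow i^{*}-1$, so that $n-j=1$ and the left-hand side becomes exactly the telescoping sum $\sum_{j'=2}^{i^{*}}(\tau^{j',l}-\tau^{j'-1,r})$. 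The resulting bound is
$$(\hat{a}_{i^{*},1}-1)\tau^{1,r}(t_{1,r})+\hat{b}_{i^{*},i^{*}-1}+\sum_{i=2}^{i^{*}-1}\hat{a}_{i^{*},i}\hat{b}_{i,i-1}=(\hat{a}_{i^{*},1}-1)\tau^{1,r}(t_{1,r})+\sum_{i=2}^{i^{*}}\hat{a}_{i^{*},i}\hat{b}_{i,i-1},$$
where the last equality absorbs the boundary term via $\hat{a}_{i^{*},i^{*}}=1$.

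To close, I would consolidate all the bias terms using the chain identity $\hat{b}_{n,1}=\sum_{i=2}^{n}\hat{a}_{n,i}\hat{b}_{i,i-1}$ (with $\hat{a}_{n,n}=1$), which is the standard multiplicative rule for relative offsets along a composed affine clock map. Dividing through by $\hat{a}_{n,i^{*}}$ and splitting the sum at $i=i^{*}$ gives
$$\frac{\hat{b}_{n,1}}{\hat{a}_{n,i^{*}}}=\sum_{i=2}^{i^{*}}\hat{a}_{i^{*},i}\hat{b}_{i,i-1}+\sum_{i=i^{*}+1}^{n}\frac{\hat{b}_{i,i-1}}{\hat{a}_{i,i^{*}}},$$
via the two identities $\hat{a}_{n,i}/\hat{a}_{n,i^{*}}=\hat{a}_{i^{*},i}$ for $i\leq i^{*}$ and $\hat{a}_{n,i}/\hat{a}_{n,i^{*}}=1/\hat{a}_{i,i^{*}}$ for $i\geq i^{*}$, both of which follow from the multiplicativity of relative skew along the chain. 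This is precisely the lower-tail bias sum plus the upper-tail bias sum, so adding the two tail bounds produces the inequality in the statement.

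The main obstacle is the index bookkeeping in this last consolidation: one must track which $\hat{a}_{p,q}$ factors arise from re-indexing in each tail, verify that the boundary contribution at $i=i^{*}$ is not double-counted (it is absorbed once into the lower-tail sum through $\hat{a}_{i^{*},i^{*}}=1$), and confirm that the chain-offset identity reassembles cleanly into $\hat{b}_{n,1}/\hat{a}_{n,i^{*}}$. Once this algebra is pinned down, the inequality follows by simply adding the two tail bounds.
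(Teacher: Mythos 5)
Your proof is correct and takes essentially the same route as the paper: split the telescoping sum at $i^{*}$, apply Lemma~\ref{istarmaxsumlemma} to the sub-chain $i^{*},\ldots,n$ and Lemma~\ref{istarminsumlemma} to the sub-chain $1,\ldots,i^{*}$ (with Lemma~\ref{lemmaistar} supplying the sign hypotheses on each piece), and then consolidate the offset terms via the multiplicative chain identity $\hat{b}_{n,1}=\sum_{i=2}^{n}\hat{a}_{n,i}\hat{b}_{i,i-1}$ together with $\hat{a}_{n,i}/\hat{a}_{n,i^{*}}=\hat{a}_{i^{*},i}$ for $i\le i^{*}$ and $=1/\hat{a}_{i,i^{*}}$ for $i\ge i^{*}$. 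Your treatment of the boundary term via $\hat{a}_{i^{*},i^{*}}=1$ is the careful version of the algebra the paper performs implicitly.
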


\begin{proof}
$\sum^{n}_{j=2}\tau^{j,l}(t_{j,l})-\tau^{j-1,r}(t_{j-1,r})$
$ \sum^{i^{*}}_{j=2}\tau^{j,l}(t_{j,l})-\tau^{j-1,r}(t_{j-1,r})+\sum^{n}_{j=i^{*}+1}\tau^{j,l}(t_{j,l})-\tau^{j-1,r}(t_{j-1,r})$
$ =(\hat{a}_{i^{*},1}-1)\tau^{1,r}(t_{1,r})+\hat{b}_{i^{*},i^{*}-1}$
$ +\sum^{i^{*}}_{j=2}\hat{a}_{i^{*},j}\hat{b}_{j,j-1}+\left(\frac{\hat{a}_{n,i^{*}}-1}{\hat{a}_{n,i^{*}}}\right)\tau^{n,l}(t_{n,l})+\sum^{n}_{i=i^{*}+1}\frac{\hat{b}_{i,i-1}}{\hat{a}_{i,i^{*}}}$
$ =(\hat{a}_{i^{*},1}-1)\tau^{1,r}(t_{1,r})+\left(\frac{\hat{a}_{n,i^{*}}-1}{\hat{a}_{n,i^{*}}}\right)\tau^{n,l}(t_{n,l})$
$ +\sum^{n}_{j=2}\frac{\hat{a}_{n,j}\hat{b}_{j,j-1}}{\hat{a}_{n,i^{*}}}$
$ =(\hat{a}_{i^{*},1}-1)\tau^{1,r}(t_{1,r})+\left(\frac{\hat{a}_{n,i^{*}}-1}{\hat{a}_{n,i^{*}}}\right)\tau^{n,l}(t_{n,l})+\frac{\hat{b}_{n,1}}{\hat{a}_{n,i^{*}}}.
$
which follow by applying Lemma \ref{istarminsumlemma} and Lemma \ref{istarmaxsumlemma}, by multiplying the terms in each summation by $\frac{\hat{a}_{n,i^{*}}}{\hat{a}_{n,i^{*}}}$ and simplifying, and from the definitions of $\hat{b}_{ij}$ and $\hat{d}^{(i)}_{ji}$.
\end{proof}

Now that we have an upper bound on $S_{1}$, we can
obtain a lower bound on $S_{2}$, the sum of the forwarding delays.  

\begin{lemma} \label{delaysumlemma}
The sum of forwarding delays in the chain network satisfies:
$\sum^{n-1}_{j=2}\left(\tau^{j,l}(t_{j,l})-\tau^{j,r}(t_{j,r})\right) \geq\frac{(a_{n,1}-\hat{a}_{n,1})}{\hat{a}_{n,i^{*}}}\tau^{1,r}(t_{1,r})+\frac{(b_{n,1}-\hat{b}_{n,1})}{\hat{a}_{n,i^{*}}}$.
\end{lemma}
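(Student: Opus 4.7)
The plan is to exploit the telescoping identity established just before Lemma~\ref{istarmaxsumlemma}. Pairing the left and right time-stamps at each intermediate node gives the purely algebraic identity
\begin{equation*}
\tau^{n,l}(t_{n,l}) - \tau^{1,r}(t_{1,r}) \;=\; S_{1} + \sum_{i=2}^{n-1}\bigl(\tau^{i,r}(t_{i,r}) - \tau^{i,l}(t_{i,l})\bigr),
\end{equation*}
where $S_{1} = \sum_{i=2}^{n}\bigl(\tau^{i,l}(t_{i,l}) - \tau^{i-1,r}(t_{i-1,r})\bigr)$ accounts for the link segments and the second term collects the internal forwarding delays. Since Lemma~\ref{allskewlemma} produces an explicit upper bound $\overline{S}_{1}$ on $S_{1}$, rearranging isolates the forwarding-delay sum from below:
\begin{equation*}
\sum_{i=2}^{n-1}\bigl(\tau^{i,r}(t_{i,r}) - \tau^{i,l}(t_{i,l})\bigr) \;\geq\; \tau^{n,l}(t_{n,l}) - \tau^{1,r}(t_{1,r}) - \overline{S}_{1}.
\end{equation*}

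Substituting the expression from Lemma~\ref{allskewlemma} and collecting coefficients of $\tau^{n,l}(t_{n,l})$ and $\tau^{1,r}(t_{1,r})$ collapses the right-hand side to
\begin{equation*}
\frac{1}{\hat{a}_{n,i^{*}}}\,\tau^{n,l}(t_{n,l}) \;-\; \hat{a}_{i^{*},1}\,\tau^{1,r}(t_{1,r}) \;-\; \frac{\hat{b}_{n,1}}{\hat{a}_{n,i^{*}}}.
\end{equation*}
I would then invoke the two remaining structural facts: node $n$'s clock is genuinely affine in node $1$'s clock, so $\tau^{n,l}(t_{n,l}) = a_{n,1}\,\tau^{1,r}(t_{1,r}) + b_{n,1}$; and by the chain rule on declared skews, $\hat{a}_{n,1} = \hat{a}_{n,i^{*}}\,\hat{a}_{i^{*},1}$. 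Plugging the first into the coefficient of $\tau^{1,r}(t_{1,r})$ and simplifying with the second yields precisely
\begin{equation*}
\frac{a_{n,1} - \hat{a}_{n,1}}{\hat{a}_{n,i^{*}}}\,\tau^{1,r}(t_{1,r}) + \frac{b_{n,1} - \hat{b}_{n,1}}{\hat{a}_{n,i^{*}}},
\end{equation*}
the claimed lower bound.

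The hard part is conceptual rather than computational: the telescoping identity must be split exactly so that every link-segment term lands inside $S_{1}$ (where Lemma~\ref{allskewlemma} controls it) while each intra-node forwarding gap is pushed into the residual sum. Once that bookkeeping is correct, the remainder is short arithmetic. The payoff is that any discrepancy between the actual and declared end-to-end skew, amplified linearly by $\tau^{1,r}(t_{1,r})$, is provably dumped into the aggregate forwarding delay; this is precisely the mechanism by which Theorem~\ref{bigtheorem} forces at least one per-hop delay to exceed $K$ once node~$1$ has waited long enough.
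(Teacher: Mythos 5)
Your route is the paper's route: split the telescoping identity $\tau^{n,l}(t_{n,l})-\tau^{1,r}(t_{1,r}) = S_{1} + \sum_{i=2}^{n-1}(\tau^{i,r}(t_{i,r})-\tau^{i,l}(t_{i,l}))$, bound $S_{1}$ above by Lemma~\ref{allskewlemma}, collect coefficients, and use $\hat{a}_{n,1}=\hat{a}_{n,i^{*}}\hat{a}_{i^{*},1}$ to collapse the result. The algebra you carry out is correct (and you use the correct sign for the forwarding delays, silently fixing a sign slip in the Lemma as printed). But there is one genuine gap.

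You assert that ``node $n$'s clock is genuinely affine in node $1$'s clock, so $\tau^{n,l}(t_{n,l}) = a_{n,1}\,\tau^{1,r}(t_{1,r}) + b_{n,1}$.'' This is false as written. The affine relation between two good clocks holds only when they are read at the \emph{same} global instant: $\tau^{n}(s)=a_{n,1}\tau^{1}(s)+b_{n,1}$ for a single $s$. Here, however, $\tau^{n,l}(t_{n,l})$ is node $n$'s reading at the global time $t_{n,l}$ when the packet arrives, while $\tau^{1,r}(t_{1,r})$ is node $1$'s reading at the earlier global time $t_{1,r}$ when the packet was launched; in general $t_{n,l}>t_{1,r}$ and the equality fails. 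What you actually need, and what the paper uses, is the \emph{inequality} $\tau^{n,l}(t_{n,l})\geq a_{n,1}\tau^{1,r}(t_{1,r})+b_{n,1}$, obtained in two steps: (i) causality of the packet, $t_{n,l}\geq t_{1,r}$, together with monotonicity of $\tau^{n,l}(\cdot)$, gives $\tau^{n,l}(t_{n,l})\geq\tau^{n,l}(t_{1,r})$; (ii) the affine relation at the common time $t_{1,r}$ gives $\tau^{n,l}(t_{1,r})=a_{n,1}\tau^{1,r}(t_{1,r})+b_{n,1}$. Since the coefficient $1/\hat{a}_{n,i^{*}}$ multiplying this term is positive, the inequality propagates in the correct direction and the lower bound survives. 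Your proof omits the causality step entirely, and without it the claimed equality would be unjustified; this is exactly the physical fact that makes the whole consistency check work --- a long wait by node $1$ forces $\tau^{1,r}(t_{1,r})$ large, and causality transfers that into a large lower bound on $\tau^{n,l}(t_{n,l})$, which in turn forces the aggregate forwarding delay upward.
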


\begin{proof}
$\sum^{n-1}_{j=2}\left(\tau^{j,l}(t_{j,l})-\tau^{j,r}(t_{j,r})\right)$
$ =\tau^{n,l}(t_{n,l})-\tau^{n,r}(t_{n,r})-\sum^{n}_{j=2}\tau^{j,l}(t_{j,l})-\tau^{j-1,r}(t_{j-1,r})$
$ \geq\tau^{n,l}(t_{n,l})-\tau^{n,r}(t_{n,r})-(\hat{a}_{i^{*},1}-1)\tau^{1,r}(t_{1,r})$
$ -\left(\frac{\hat{a}_{n,i^{*}}-1}{\hat{a}_{n,i^{*}}}\right)\tau^{n,l}(t_{n,l})-\frac{\hat{b}_{n,1}}{\hat{a}_{n,i^{*}}}$
$ =\frac{\tau^{n,l}(t_{n,l})}{\hat{a}_{n,i^{*}}}-\hat{a}_{i^{*},1}\tau^{1,r}(t_{1,r})-\frac{\hat{b}_{n,1}}{\hat{a}_{n,i^{*}}}$
$ \geq\frac{\tau^{n,l}(t_{1,r})}{\hat{a}_{n,i^{*}}}-\hat{a}_{i^{*},1}\tau^{1,r}(t_{1,r})-\frac{\hat{b}_{n,1}}{\hat{a}_{n,i^{*}}}$
$ =\frac{a_{n,1}\tau^{1,r}(t_{1,r})+b_{n,1}}{\hat{a}_{n,i^{*}}}-\hat{a}_{i^{*},1}\tau^{1,r}(t_{1,r})-\frac{\hat{b}_{n,1}}{\hat{a}_{n,i^{*}}}$
$ =\frac{(a_{n,1}-\hat{a}_{n,1})}{\hat{a}_{n,i^{*}}}\tau^{1,r}(t_{1,r})+\frac{(b_{n,1}-\hat{b}_{n,1})}{\hat{a}_{n,i^{*}}}$,
which follow by noting from repeated addition and subtraction that $\tau^{n,l}(t_{n,l})=\tau^{1,r}(t_{1,r})+\sum^{n}_{j=2}\left(\tau^{j,l}(t_{j,l})-\tau^{j-1,r}(t_{j-1,r})\right)+\sum^{n-1}_{j=2}\left(\tau^{j,l}(t_{j,l})-\tau^{j,r}(t_{j,r})\right)$, by applying Lemma \ref{allskewlemma}, because $t_{n,l}\geq t_{1,r}$ since node $n$ could not have received the timing packet before node $1$ transmitted it, and since node $n$'s clock is relatively affine with respect to node $1$'s clock.
\end{proof}

We now complete the proof of consistency check for a chain network.  We show that if the start time of the consistency check is sufficiently large, and the left and right clocks $\{\tau^{i,l}(t_{i,l}),\tau^{i,r}(t_{i,r})\}$ satisfy the parameter consistency condition, then at least one node will violate delay bound condition.  Hence there are no left and right clocks that can pass both conditions of consistency check if start time is large. 

\begin{proof}
	We assume node $1$ is a good node.  Now 
$\frac{(a_{n,1}-\hat{a}_{n,1})}{\hat{a}_{n,i^{*}}}\tau^{1,r}(t_{1,r})+\frac{(b_{n,1}-\hat{b}_{n,1})}{\hat{a}_{n,i^{*}}}>nK$.
But by Lemma \ref{delaysumlemma} the LHS of this inequality is the lower bound of the sum of the delays in the chain $\sum^{n}_{j=2}\left(\tau^{j,l}(t_{j,l})-\tau^{j,r}(t_{j,r})\right)$.  By substitution, 
$\sum^{n}_{j=2}\left(\tau^{j,l}(t_{j,l})-\tau^{j,r}(t_{j,r})\right)>nK$.
It follows that for some malicious node $j\in\{2,\ldots,n\}$, $\tau^{j,l}(t_{j,l})-\tau^{j,r}(t_{j,r})>K$ which violates the delay bound condition.
\end{proof}

    Now we can show that neighbor and network discovery phases together allow the good nodes to form a rudimentary network, where the good nodes have the same topological view and consistent estimates of a reference clock.  The first obstacle is that the protocol is composed of stages that must be completed sequentially by all the nodes in the network, even prior to clock synchronization.  
    Suppose that $[t_{k},t_{k+1})$ is the interval allocated to the $kth$ stage.  Any messages transmitted between adjacent good nodes must arrive in the same interval they were transmitted. Since send-times are measured with respect to the source clock, and  receive-times with respect to the destination clock,
      the intervals must be chosen large enough to compensate for the maximum clock divergence caused by skew $a_{ij}\leq a_{max}$ and offset $b_{ij}\leq a_{max}U_{0}$.  

    \begin{lemma}
	    \label{timeintervals}
	    There exists a sequence of adjacent time-intervals $[t_{k},t_{k+1})$ and corresponding schedule that guarantees any message of size $W$ transmitted (via OMC) by node $i$ in the interval $[t_{k},t_{k+1})$ (as measured by $i$'s clock) will be received by node $j$ in the same interval as measured by node $j$'s clock.
    \end{lemma}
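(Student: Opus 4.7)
The plan is to construct the sequence $\{t_k\}$ inductively so that for every pair of good neighbors $i,j$ the worst-case mismatch between their local clocks is absorbed within the interval length $t_{k+1}-t_k$. First, I would translate the setup to real time. By (CL1), $i$'s interval $[t_k,t_{k+1})$ measured on $i$'s clock corresponds to real time $[(t_k-b_i)/a_i,(t_{k+1}-b_i)/a_i)$, and similarly for $j$; by (CL3) the two real-time windows are related through the affine map $s\mapsto a_{ji}s+b_{ji}$ with $0<a_{ji}\le a_{max}$ and $|b_{ji}|\le a_{max}U_0$. Thus, at the real moment when $i$'s clock reads $t_k$, $j$'s clock reads $\tau^{j}_{i}(t_k)=a_{ji}t_k+b_{ji}$, and the two readings differ by at most $(a_{max}-1)t_k+a_{max}U_0$, a quantity that grows linearly with $t_k$.

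Second, I would fold in the OMC guarantee: once $i$ initiates a transmission at its local time $s\in[t_k,t_{k+1})$, the Orthogonal MAC Code of \cite{ponhukum12} delivers the packet of duration $W$ within at most $T_{MAC}(W)$ of $i$'s local clock. Hence the reception event at $j$ occurs at $j$'s local time at most $a_{ji}(t_{k+1}+T_{MAC}(W))+b_{ji}$ and at $j$'s local time at least $a_{ji}t_k+b_{ji}$. To have the packet received inside $j$'s interval $[t_k,t_{k+1})$ measured on $j$'s clock, these two bounds must satisfy $a_{ji}t_k+b_{ji}\ge t_k$ and $a_{ji}(t_{k+1}+T_{MAC}(W))+b_{ji}< t_{k+1}$. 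Worst-casing over $a_{ji}\in(0,a_{max}]$ and $|b_{ji}|\le a_{max}U_0$, these reduce to an upper-side condition of the form $t_{k+1}>a_{max}t_k+a_{max}(T_{MAC}(W)+U_0)$ and a symmetric lower-side condition, which I would handle by inserting a small leading dead time (analogous to the $D$ used in the Scheduling Phase) so that $i$ defers initiation until its local time $t_k+D$.

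Third, I would build the sequence inductively. Choose $t_1$ large enough that the constant offset contribution $a_{max}U_0$ is dominated (handling the turn-on asynchrony of (N4)), and then pick each $t_{k+1}$ to exceed $a_{max}t_k$ by the required constant. The resulting sequence is well-defined at every stage (growing geometrically with ratio $a_{max}$), and the corresponding schedule simply assigns the relevant OMC transmit/receive pair to each interval; OMC then guarantees a collision-free successful exchange within $T_{MAC}(W)$ of the initiation.

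The main obstacle is the two-sided nature of the constraint: the reception time on $j$'s clock must lie \emph{both} at or after $t_k$ and strictly before $t_{k+1}$, and these two requirements, when worst-cased over the full range of $a_{ji}$ and $b_{ji}$, pull in opposite directions as $t_k$ increases. Resolving this by letting $t_{k+1}-t_k$ grow linearly in $t_k$ and by adding a leading dead time is the technical heart of the proof; once this is done, the inductive construction is routine.
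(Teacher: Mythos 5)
Your approach is essentially the paper's: defer $i$'s transmission by a leading dead time so that $\tau^{j}_{i}(t_{s})\geq t_{k}$ under the worst-case skew $a_{ji}\leq a_{max}$ and offset $|b_{ji}|\leq a_{max}U_{0}$, use the OMC bound $T_{MAC}(W)$ to cap the delay, and choose $t_{k+1}$ recursively so that $\tau^{j}_{i}(t_{s}+T_{MAC}(W))<t_{k+1}$. One quantitative slip: the dead time $t_{s}-t_{k}$ is not ``small'' but must itself grow linearly in $t_{k}$, and the upper-side worst case then multiplies $t_{s}$ by a further factor of $a_{max}$, so the intervals grow with ratio $(a_{max})^{2}$, not $a_{max}$ --- the paper concretely sets $t_{s}:=a_{max}t_{k}+(a_{max})^{2}U_{0}$ and $t_{k+1}:=(a_{max})^{2}t_{k}+2(a_{max})^{3}U_{0}+(a_{max})^{3}T_{MAC}(W)$; your closing observation that $t_{k+1}-t_{k}$ must grow linearly in $t_{k}$ is consistent with this, so the slip is cosmetic rather than structural.
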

    \begin{proof}
	    Set $t_{k+1}:=(a_{max})^{2}t_{k}+2(a_{max})^{3}U_{0}\\*+(a_{max})^{3}T_{MAC}(W)$.
	     Suppose a message from node $i$ to node $j$ during $[t_{k},t_{k+1})$ is transmitted (via the OMC) at $t_{s}:=a_{max}t_{k}+(a_{max})^{2}U_{0}$ with respect to node $i$'s clock.  
	     By substitution and simplification it follows that $\tau^{j}_{i}(t_{s})\geq t_{k}$ and
	     $\tau^{j}_{i}(t_{s}+T_{MAC}(W))<t_{k+1}$.
	     Hence $\tau^{j}_{i}([t_{s},t_{s}+T_{MAC}(W)))\subset[t_{k},t_{k+1})$, and
	     so $j$ receives this message during the same interval with respect to $j$'s clock.        
    \end{proof}



    \begin{theorem}
	    \label{networkdiscovery}
	    After Network Discovery, the good nodes have a common view of the topology and consistent estimates (to within $\epsilon_{a}$) of the skew of the reference clock.
    \end{theorem}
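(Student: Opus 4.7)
The plan is to analyze the three sub-procedures of \textsc{NetworkDiscovery} sequentially and show each preserves the guarantees needed for the final conclusion.

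First, I would argue that \textsc{EIGByzMAC}(${\cal N}_i$) produces an identical raw view of neighborhood lists and signed link certificates at every good node. Lemma~\ref{timeintervals} ensures that each of the $n$ EIG rounds completes within its allocated interval despite unsynchronized clocks, so the timing prerequisites of the Bar-Noy et al.\ algorithm of \cite{BarNoy1987} are satisfied. The standard correctness argument for EIG Byzantine agreement, combined with assumption (C) (the subgraph of good nodes is connected over bidirectional links), then yields a common transcript. Because link certificates are signed under (CR1)--(CR3), bad nodes cannot forge certificates on behalf of good nodes; they can at most inject false certificates for edges incident to themselves.

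Second, I would invoke the chain-network analysis culminating in Theorem~\ref{bigtheorem} to show that \textsc{ConsistencyCheck} removes at least one bad-endpoint link from every inconsistent cycle. The $START$ value set in Algorithm~\ref{ConsistencyCheck} is exactly the threshold $T_j$ above which the consistency condition \eqref{linearleftright} and the delay bound condition cannot be simultaneously satisfied on a chain with discrepant declared skew product. Hence after this phase, every surviving cycle has skew product within $\epsilon_a$ of one. Then \textsc{EIGByzMAC}(${\cal T}_i$) disseminates the set of timing packets; by the same reasoning as in the first step, all good nodes obtain the same collection of timing evidence and therefore perform identical pruning under the deterministic \textsc{Decide} subroutine, yielding a common topology view.

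The main obstacle is verifying that after pruning, a consistent path still connects every pair of good nodes so the reference-clock estimate is well-defined and accurate. Here I would argue as follows: assumption (C) guarantees a bidirectional path between any two good nodes through intermediate good nodes, and such a purely-good path has, by (CL1) and (CL3), an exactly correct skew product, so it trivially survives the consistency check. Therefore for every good node $i$ at least one consistent path to the designated reference node $r$ (the node with smallest ID) exists in the common topology. Since any surviving path has skew product agreeing with the true $a_{ri}$ to within $\epsilon_a$ by the tolerance built into the check, each good node's computed $\hat{a}_{ri}$ satisfies $|\hat{a}_{ri}-a_{ri}|\leq\epsilon_a$. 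This delivers both conclusions: a common topology view and reference-clock skew estimates consistent to within $\epsilon_a$.
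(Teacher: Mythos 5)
Your proposal follows the same structure as the paper's proof: Lemma~\ref{timeintervals} guarantees coordinated progression through stages, EIGByzMAC plus assumption (C) yields a common view of link certificates, Consistency Check (Theorem~\ref{bigtheorem}) eliminates paths with falsified skews, and a final EIGByzMAC round gives a common pruned topology with consistent reference-clock estimates. You make explicit the step the paper leaves implicit---that a purely good bidirectional path survives pruning so a valid skew estimate to within $\epsilon_a$ always exists---which is a correct and useful elaboration rather than a different route.
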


    \begin{proof}
	    From Lemma \ref{timeintervals} 
	    all good nodes 
	    will proceed through each stage of Neighbor and Network Discovery Phases together, and therefore 
	    establish link certificates with their good neighbors.  Since they form a connected component, the good nodes obtain a common view of their link certificates using the EIGByzMAC algorithm and the schedule in Lemma \ref{timeintervals}.  The good nodes can therefore infer the network topology and the relative skews of all adjacent nodes based upon the collection of link certificates.  Using Consistency Check, the good nodes can eliminate paths along which bad nodes have provided false skew data.  The good nodes can disseminate this information to each other using the EIGByzMAC algorithm and Lemma \ref{timeintervals} and thus obtain consistent estimates of the reference clock to within $\epsilon_{a}$.     
    \end{proof}


    \begin{lemma}
	    The sequence of adjacent intervals $[t_{j},t_{j+1})$, $j=0,\ldots,k$ is contained in $[t_{0}, c_{1}t_{0}+c_{2}W)$ where constants
	    $c_{1}$ and $c_{2}$ depend on $a_{max}$, $k$, $U_{0}$, and $n$.
    \end{lemma}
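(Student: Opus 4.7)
The plan is to solve the linear recurrence built into Lemma~\ref{timeintervals} and then collect constants so that the resulting closed-form bound has the right structure. Recall that that lemma sets
\[
 t_{j+1} \;=\; (a_{\max})^{2}\, t_{j} \;+\; 2(a_{\max})^{3} U_{0} \;+\; (a_{\max})^{3}\, T_{\mathrm{MAC}}(W),
\]
so the sequence $\{t_{j}\}$ obeys a first-order inhomogeneous linear recursion in $j$ with constant multiplier $(a_{\max})^{2}$ and constant forcing term $A(W) := 2(a_{\max})^{3}U_{0} + (a_{\max})^{3}T_{\mathrm{MAC}}(W)$.

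First, I would unroll the recurrence directly. Iterating $k$ times yields
\[
 t_{k} \;=\; (a_{\max})^{2k}\, t_{0} \;+\; A(W)\sum_{j=0}^{k-1}(a_{\max})^{2j},
\]
and the geometric sum is at most $k\,(a_{\max})^{2(k-1)}$ (or, equivalently, $\tfrac{(a_{\max})^{2k}-1}{(a_{\max})^{2}-1}$ when $a_{\max}\neq 1$). Since the intervals are adjacent and $t_{j+1}>t_{j}$, we have $[t_{0},t_{k+1})\supseteq\bigcup_{j=0}^{k}[t_{j},t_{j+1})$, so it suffices to bound $t_{k+1}$.

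Second, I would package the constants. Define $c_{1}:=(a_{\max})^{2(k+1)}$, which depends only on $a_{\max}$ and $k$, and absorb the $U_{0}$ contribution from $A(W)$ into an additive constant, also a function of $a_{\max}$, $k$, $U_{0}$. The remaining $W$-dependent piece is a positive multiple of $T_{\mathrm{MAC}}(W)$; the Orthogonal MAC Code from~\cite{ponhukum12} used throughout the protocol gives $T_{\mathrm{MAC}}(W)$ linear (or at worst affine) in $W$ with coefficient depending on $n$ (the number of node pairs that must be interleaved within one MAC cycle). Writing $T_{\mathrm{MAC}}(W)\le \gamma(n)\,W$ and absorbing the remaining constants into a single $c_{2}=c_{2}(a_{\max},k,U_{0},n)$ yields $t_{k+1}\le c_{1} t_{0}+c_{2} W$, which is the desired containment $[t_{0},t_{k+1})\subseteq [t_{0}, c_{1}t_{0}+c_{2}W)$.

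The main obstacle is not the recurrence itself — that is a one-line unrolling — but rather the bookkeeping of what $T_{\mathrm{MAC}}(W)$ looks like and showing that $U_{0}$ can be swept into either $c_{1}t_{0}$ or $c_{2}W$ within the allowed dependence. If $T_{\mathrm{MAC}}(W)$ has an additive $W$-independent part, one must bound that part by a constant depending on $n$ and $U_{0}$, which is why those parameters appear in $c_{2}$. Everything else is straightforward algebra on a geometric series.
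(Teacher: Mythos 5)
Your proof takes the same route as the paper: unroll the recurrence $t_{j+1}=(a_{max})^2 t_j + 2(a_{max})^3 U_0 + (a_{max})^3 T_{MAC}(W)$ and substitute the OMC bound $T_{MAC}(W)\leq cW$, then collect constants (the paper does this by a terse induction on $k$, which is equivalent to your explicit unrolling of the geometric series). One small note: in the paper the OMC coefficient $c$ depends on $a_{max}$ as well as $n$, and, as you rightly flag, the $W$-independent $U_0$ term must be tucked into $c_1 t_0$ or $c_2 W$, which tacitly assumes $t_0$ or $W$ is bounded below.
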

    
    \begin{proof}
	   For the OMC $T_{MAC}(W)\leq cW$, where $c$ depends on $a_{max}$, and $n$.  The result for $k=1$ follows from definition of $t_{k}$, and substitution of $cW$ into $T_{MAC}(W)$, and for general $k$ by induction and definition of $t_{k}$.
    \end{proof}

        \begin{lemma}
		\label{networkdiscoverytime}
	    The time to complete Neighbor and Network Discovery Phases $T_{nei}+T_{net}$ is less than $c_{1}\log T_{life}+\frac{c_{2}}{\epsilon_{a}}$ where $c_{1},c_{2}$ depend only on $n,a_{max},
	    U_{0}$.
    \end{lemma}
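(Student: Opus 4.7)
The plan is to apply the interval-growth recursion from the preceding lemma to every stage of the two phases, inject the minimum Consistency Check start time, and verify that the per-packet size $W$ grows only logarithmically in the network lifetime $T_{life}$. The bound then comes out of arithmetic.

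First I would count the stages. The Neighbor Discovery Phase consists of the six sequential stages $S_1,\ldots,S_6$ of Algorithm \ref{NeighborDiscoveryAlg}. The Network Discovery Phase invokes the EIG Byzantine Generals' procedure twice (once on neighborhoods, once on the collected timing packets), each of depth $n$, separated by a Consistency Check phase. Altogether this is $k = 6 + 2n + 1 = O(n)$ adjacent intervals generated by the recurrence $t_{j+1} = a_{max}^{2} t_{j} + 2 a_{max}^{3} U_{0} + a_{max}^{3} T_{MAC}(W)$ from Lemma \ref{timeintervals}. The preceding lemma then places the entire schedule inside $[t_{0}, c_{1}' t_{0} + c_{2}' W)$, where $c_{1}', c_{2}'$ depend only on $n, a_{max}, U_{0}$.

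Next I would lower-bound the start time $t_{0}$. The Consistency Check in Algorithm \ref{ConsistencyCheck} sets $START = \frac{(n+1)(a_{max})^{n+1}(1+U_{0})}{\epsilon_{a}}$, which by Theorem \ref{bigtheorem} is exactly what is needed for at least one malicious endpoint in every inconsistent cycle to be exposed. This gives $t_{0} = \Theta(1/\epsilon_{a})$ with the hidden constant depending only on $n, a_{max}, U_{0}$, so the $c_{1}' t_{0}$ term contributes the $c_{2}/\epsilon_{a}$ piece of the final bound.

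Finally I would bound the packet size $W$. Every packet carries identity certificates, node IDs, and, for timing packets and EIG-leaf data, time-stamps. Because every time-stamp takes values in $[0, T_{life}]$, its binary encoding uses $O(\log T_{life})$ bits; the remainder (certificates, IDs, and the $O(n^{n})$ EIG tree labels) is bounded by a function of $n$ alone. Hence $W = f(n)\,\log T_{life}$, and since the OMC satisfies $T_{MAC}(W) \leq c\,W$, the $c_{2}' W$ term contributes the $c_{1}\log T_{life}$ piece. Adding the two contributions yields the claimed bound.

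The main obstacle is the packet-size accounting: one must verify that none of the payloads propagated through the EIG tree smuggle in a factor of $T_{life}$ beyond the logarithmic time-stamp cost, and that the certificate/ID overhead is absorbable into constants depending on $n$ only. Once that is pinned down, the recursion of Lemma \ref{timeintervals} does all the remaining work and the two additive terms $c_{1}\log T_{life}$ and $c_{2}/\epsilon_{a}$ emerge directly.
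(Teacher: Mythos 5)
Your proposal follows essentially the same route as the paper: count the protocol stages, apply the interval-growth recursion to bound the total duration by $c_{1}'t_{0}+c_{2}'W$, force $t_{0}=\Theta(1/\epsilon_{a})$ via the Consistency Check start time, and bound $W$ by $O(\log T_{life})$ since the dominant payload is a time-stamp; the worst-case assumption that the start-time constraint binds at the first stage and the bound $T_{MAC}(W)\le cW$ also match the paper. The one slip is the stage count for Consistency Check: you treat it as a single interval (so $k=6+2n+1=O(n)$), whereas the timing packet must traverse each inconsistent cycle hop by hop, giving on the order of $n|C|$ intervals where $|C|$ is the number of cycles tested (the paper counts $6+n+n|C|+n$ stages). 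Since $|C|$ is determined by the topology and hence bounded by a function of $n$ alone, this does not change the conclusion that $c_{1},c_{2}$ depend only on $n,a_{max},U_{0}$, but your $O(n)$ claim for the number of intervals is an undercount.
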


    \begin{proof}
	    From Algorithms \ref{NeighborDiscoveryAlg}, \ref{EIGByzMAC}, \ref{ConsistencyCheck} and \ref{NetworkDiscovery} there are at most $6+n+n|C|+n$ protocol stages in the Neighbor and Network Discovery Phases. Hence the time required is at most $c_{1}t_{0}+c_{2}W$, where $W$ is the size of a message to be transmitted, and $c_{1},c_{2}$ are constants depending on the number of protocol stages $a_{max},U_{0},n$.  The maximum size of a message is proportional to the timing packet size $\log T_{life}$.  To account for the effect of the minimum start-time $T_{s}$ for the consistency check, we can assume the worst case that the $T_{s}$ comes into effect during the first protocol stage (instead of later in the Network Discovery Phase).  From Theorem \ref{bigtheorem} the consistency check start-time is at most $\frac{c}{\epsilon_{a}}$, where $c$ depends on $U_{0}, a_{max},n$.  Substitution into $t_{0}$ proves the lemma.
    \end{proof}

    \begin{lemma}
	    The time required for the Data Transfer Phase is at most $c_{3}B+c_{4}D$ where $B$ is the time spent transmitting data packets, $D$ is the size of the dead-time separating  time slots, and $c_{3},c_{4}$ depend on $n$ alone.
    \end{lemma}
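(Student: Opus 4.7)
The plan is to read the bound directly off the structure of the Data Transfer Phase as defined in Algorithm 7 and the associated time-slot definition. Recall that the phase is partitioned into $N = n^{2}(n-1)$ consecutive time-slots $S_{k}=[t_{k},t_{k+1})$, where the slot endpoints are defined by the recursion $t_{k+1} := t_{k}+B_{slot}+2D$. Each slot therefore has length $B_{slot}+2D$, with the transmitter guard-band of duration $D$ at the front (nodes only begin transmitting after the first $D$ time-units) and an implicit trailing $D$ to absorb the clock skew of the receiver, so that, as argued in the Data Transfer subsection, any packet transmitted by the scheduled sender is guaranteed to be received within the same slot.

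The first step is to sum slot lengths. The total duration of the phase, measured on the reference clock, is
\begin{equation*}
    T_{data} \;=\; \sum_{k=1}^{N}(t_{k+1}-t_{k}) \;=\; N(B_{slot}+2D) \;=\; NB_{slot}+2ND.
\end{equation*}
Identifying $B := NB_{slot}$ with the total airtime spent transmitting data packets (as stated in the lemma), this rewrites as $T_{data} = B + 2ND$, giving the desired form $c_{3}B+c_{4}D$ with $c_{3}=1$ and $c_{4}=2N=2n^{2}(n-1)$, both depending only on $n$.

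The second step is to justify that this same bound holds no matter which good node's local estimate of the reference clock is used to measure the phase. Since after Network Discovery each good node's estimate $\hat{\tau}^{r}_{i}(t)$ of the reference skew differs from the truth by at most $\epsilon_{a}$ (Theorem \ref{networkdiscovery}), and since the dead-time $D$ has been chosen to dominate $|\hat{\tau}^{r}_{i}(t)-\hat{\tau}^{r}_{j}(\tau^{j}_{i}(t))|$, the phase endpoints observed by any two good nodes agree to within at most $D$; absorbing this slack into the already-present $2ND$ term changes only the constant $c_{4}$, which still depends solely on $n$.

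There is essentially no hard obstacle here: the lemma is a bookkeeping consequence of the Scheduling and Data Transfer Phase definitions. The only point that requires care is confirming that $N$ is a function of $n$ alone (which it is, by the earlier observation that $n^{2}(n-1)$ slots suffice to let every ordered pair of nodes communicate via multi-hop routing) and that the per-slot overhead is exactly $2D$, not a quantity that itself depends on $B$; both are immediate from Algorithm 7.
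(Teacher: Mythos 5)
Your proof is correct and follows essentially the same route as the paper's: count the $N=n^{2}(n-1)$ time slots, each of length $B_{slot}+2D$, so the total is $NB_{slot}+2ND = B + 2ND$, with $c_{3}=1$ and $c_{4}=2n^{2}(n-1)$ depending only on $n$. The paper's proof is just a terser statement of this same bookkeeping, and your extra remark about absorbing the bounded clock-estimate discrepancy into $c_{4}$ is a harmless refinement rather than a departure.
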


    \begin{proof}
	    The total number of time-slots for data transfer between all source-destination pairs is $n^{2}(n-1)$,  each supporting data transfer of size $B_{s}$ and a dead-time $D$.
    \end{proof}

    \begin{lemma}
	    The time required for the Verification Phase is at most $c_{5}D$ where $c_{5}$ depends on $n$ alone.
    \end{lemma}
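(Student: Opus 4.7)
The plan is to prove this lemma by direct counting of the time slots used in the Verification Phase, appealing to the structural description already given after Algorithm \ref{Verification}. The Verification Phase consists of a single invocation of the EIG Byzantine General's algorithm on the failure list $L_k$, followed by a local update of ${\cal C}_{k+1}$, which incurs no channel time. So the entire time cost reduces to the duration of one EIG run scheduled over the network.

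First I would recall that the EIG algorithm has exactly $n$ stages (one per level of the depth-$n$ EIG tree). In each stage, every ordered pair of good nodes $(i,j)$ potentially needs to exchange information; this accounts for at most $n(n-1)$ pairs. By the connectedness assumption (C), each such exchange can be realized along a path in ${\cal G}^*$ of length at most $n$, so at most $n$ forwarding hops are needed per pair. This yields the stated $n \cdot n(n-1) \cdot n = n^3(n-1)$ time slots as an upper bound on the total number of transmission opportunities required.

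Next I would account for the slot structure itself. As described in the Data Transfer and Scheduling discussions, each slot is separated from the next by a dead-time of $2D$ to prevent collisions induced by residual clock skew (the estimates of the reference clock at different good nodes differ by at most $D$, which is enforced by the choice of $D$ from the Scheduling Phase). The packet content within a slot is the (constant-size) EIG-level information plus authentication overhead; its transmission length does not depend on $D$ and can be absorbed into the slot's fixed transmission portion. Thus the dominating contribution to the total wall-clock time is $n^3(n-1) \cdot 2D$, giving $T_{\text{ver}} \leq c_5 D$ with $c_5 := 2 n^3 (n-1)$, which depends on $n$ alone as claimed.

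There is no real obstacle here; the only subtlety is making sure that the transmission/processing time of each EIG message is either bounded by a constant independent of $D$ or can be folded into the dead-time budget so that the final bound is genuinely linear in $D$ with an $n$-dependent (but $D$-independent, $\epsilon_a$-independent, and $T_{\text{life}}$-independent) constant. This is justified because the EIG messages in the Verification Phase carry only the bounded-size CTV/packet identifiers in $L_k$, not the payloads of the Data Transfer Phase, and the OMC delay $T_{\text{MAC}}(W)$ applied to such a fixed-size $W$ contributes a constant that can be absorbed into $c_5$.
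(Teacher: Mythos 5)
Your proposal is correct and takes essentially the same route as the paper: count the EIG stages ($n$), count the transmission opportunities needed per stage ($n(n-1)$ pairs, each over a path of at most $n$ hops), note each slot carries a message whose size is bounded by a quantity depending only on $n$ (not on $D$, $\epsilon_a$, or $T_{\mathrm{life}}$), and multiply by the per-slot budget of order $D$. Your slot count $n^3(n-1)$ actually matches the paper's descriptive text before Algorithm~\ref{Verification} more faithfully than the paper's own lemma proof (which drops the factor of $n$ for multihop forwarding), and you are a bit more explicit than the paper about why the per-message transmission time can be folded into the $c_5 D$ budget — the paper asserts the message time ``is at most $cD$'' somewhat tersely, whereas you spell out that $W$ is independent of $D$ and that $T_{\mathrm{MAC}}(W)$ is therefore absorbable into an $n$-dependent constant.
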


    \begin{proof}
	    In each stage of the EIG Byzantine General's algorithm, there are at most $n!$ vertex values that must be transmitted with each node in the neighborhood.  The value of a vertex is a list of CTVs.  There are at most $2^{n}$ CTVs and at most $n$ nodes in a CTV.  Therefore the size of any message to be transmitted by a node during EIG algorithm is at most $cD$, where $c$ is a constant dependent on $n$.  Since there are $n(n-1)$ possible source-destination pairs, there are at most $n(n-1)$ time slots in each stage, separated at the beginning and end by a dead-time $D$.  Therefore the duration of each stage is at most $cD+n(n-1)2D$.  There are at most $n$ stages.  
    \end{proof}

    We can now prove the main theorem of this paper.

    \begin{theorem} \label{main}
	    The protocol ensures that the network proceeds from startup to a functioning network carrying data.  There exists a selection of parameters $n_{iter}$, $D$, $B$, $\epsilon_{a}$ and $T_{life}$ that achieves min-max utility
	    over the enabled set, to within a factor $\epsilon$, where the min is over all policies of the bad nodes that  can only
	    adopt two actions in each CTV: conform to the protocol and/or jam.  The achieved utility is $\epsilon$-optimal.  
    \end{theorem}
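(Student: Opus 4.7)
The plan is to assemble the main theorem from the pieces already in place. First I would invoke Theorem~\ref{networkdiscovery} together with Lemma~\ref{timeintervals} to conclude that, after the Neighbor and Network Discovery Phases, every good node shares the same network topology and holds a common estimate of the reference clock accurate to within $\epsilon_{a}$. With $D$ chosen as in Section~\ref{The Phases of the Protocol Suite}, all good nodes then march through the remaining phases in lock-step, so that every slot-$k$ transmission scheduled by one good node is received in its slot $k$ by every other good node.

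Next I would analyze the outer Scheduling / Data Transfer / Verification loop. Because every good node runs the same utility maximization over the same ${\cal C}_{k}$ (ties broken lexicographically), the good nodes compute an identical schedule. In Data Transfer, any scheduled good-to-good packet that fails to arrive marks its CTV as offending; in Verification these CTVs are circulated by the EIG Byzantine General's algorithm, so every good node performs the same update ${\cal C}_{k+1}={\cal C}_{k}\setminus L_{k}$. Since ${\cal C}_{0}$ is finite and the sequence strictly decreases whenever any failure occurs, the set stabilizes after at most $|{\cal C}_{0}|$ iterations; taking $n_{iter}\geq|{\cal C}_{0}|$ therefore yields a steady-state family ${\cal E}^{*}$ on which every remaining CTV is executed faithfully.

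The central conceptual step, which I expect to be the main obstacle, is to identify this ${\cal E}^{*}$ with ${\cal C}\setminus{\cal D}^{*}$ for the actual disabled set ${\cal D}^{*}\in\Delta$ realized by the bad nodes, and thereby close the min-max gap. By construction, the good nodes on ${\cal E}^{*}$ attain $\max_{x\in{\cal C}({\cal E}^{*})}U(x,F({\cal E}^{*}))$; because the bad nodes chose ${\cal D}^{*}$ knowing the good nodes always play a best response on what remains, no other ${\cal D}\in\Delta$ improves their position, so the common value coincides with the upper bound~(\ref{upperbound}). The same argument shows that the bad nodes can gain nothing from Byzantine behavior beyond deciding, per CTV, whether to conform or to jam: cryptographic authentication from (CR) rules out forged identities and tampered link certificates, the Consistency Check removes any inconsistent clock declarations, and Verification prunes any CTV on which the bad nodes try to selectively block traffic, so every deviation reduces, in effect, to a jam on some CTV.

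Finally, for the $\epsilon$-approximation I would tune the five parameters so that the transient and per-iteration overheads are dominated by a long data-transfer horizon. Lemma~\ref{networkdiscoverytime} bounds the Discovery overhead by $c_{1}\log T_{life}+c_{2}/\epsilon_{a}$, Scheduling and Verification contribute at most $O(D)$ per iteration, and Data Transfer contributes $c_{3}B+c_{4}D$ per iteration. Choosing $\epsilon_{a}$ small enough that the resulting guard band $D$ costs at most an $\epsilon/3$ fraction of each slot, then $B$ large enough that $D/B$ is $O(\epsilon)$, and finally $T_{life}$ large enough that the one-time Discovery overhead is at most an $\epsilon/3$ fraction of $n_{iter}(c_{3}B+c_{4}D)$, drives the achieved utility to within a factor $(1-\epsilon)$ of the min-max in~(\ref{upperbound}), completing the proof.
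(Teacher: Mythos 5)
Your proposal is correct and follows essentially the same approach as the paper's proof: invoke the Network Discovery results to obtain a common topology and reference clock, observe that the pruning loop monotonically shrinks the set of usable CTVs until it stabilizes at some $\bar{\mathcal{D}}\in\Delta$ after finitely many iterations, argue that the steady-state utility equals the max over $\mathcal{C}\setminus\bar{\mathcal{D}}$ (hence is at least the min-max since $\bar{\mathcal{D}}$ is one admissible choice), and then tune $n_{iter}$, $D$, $B$, $\epsilon_{a}$, $T_{life}$ so that transient and dead-time overheads cost only an $\epsilon$ fraction. Two small points of imprecision worth noting: you assert that the achieved value ``coincides with'' the upper bound~(\ref{upperbound}), whereas the correct direction of the argument is that it is \emph{at least} the min-max up to the $(1-\epsilon_{d})(1-\epsilon_{l})$ factor (if the bad nodes are suboptimal the protocol can do strictly better); and your stated iteration bound $n_{iter}\geq|\mathcal{C}_{0}|$ should really track the number of distinct rate vectors / CTVs that could be pruned, which is what the paper's $\frac{n_{iter}}{n_{iter}+2^{n}k_{r}}\geq1-\epsilon_{l}$ condition encodes. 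Neither affects the soundness of the outline.
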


    \begin{proof}
	We begin by choosing parameters
	so that the protocol overhead, which includes Neighbor Discovery, Network Discovery, Verification, all dead-times, and iterations converging to the final rate vector, is an arbitrarily small fraction of the total operating lifetime. 
	With $\hat{\tau}^{r}_{i}(t):=\hat{a}_{ri}t$ the estimate of reference clock $r$ with respect to the local clock at node $i$, the maximum difference in nodal estimates is bounded as
	$
		|\hat{\tau}^{r}_{i}(\tau^{i}(t))-\hat{\tau}^{r}_{k}(\tau^{k}_{i}(\tau^{i}(t)))|\leq2(a_{max})^{2}\epsilon_{a}T_{life}+(a_{max})^{2}U_{0}
	$.
	With $k_{r}$ be the number of rate vectors in the rate region, we can
	   choose $n_{iter}$, $D$, $B$, $\epsilon_{a}$ and $T_{life}$ to satisfy:
	   $
      	\frac{n_{iter}}{n_{iter}+2^{n}k_{r}}\geq1-\epsilon_{l}
   $,
   $
    	\frac{B}{c_{1}\log T_{life}+\frac{c_{2}}{\epsilon_{a}}+B+c_{3}D+c_{4}D}\geq1-\epsilon_{d}
    	$,
	$
    	n_{iter}((c_{1}\log T_{life}+\frac{c_{2}}{\epsilon_{a}}+B+c_{3}D+c_{4}D)\leq T_{life}
    	$,
	$
    	2(a_{max})^{2}\epsilon_{a}T_{life}+(a_{max})^{2}U_{0}\leq D
    $.
    These ensure
    that the rate loss due to failed CTVs 
is arbitrarily small,
the time spent transmitting data is an arbitrarily large fraction of the duration of that iteration,  
the operating lifetime is large enough to support $n_{iter}$ protocol iterations, 
and the dead-time $D$ is large enough to tolerate the maximum divergence in clock estimates caused by skew error $\epsilon_{a}$.  
 	
	Let $\{ {\cal{D}}(t) \}$ be the decreasing sequence of sets of
	disabled CTVs, with 
	limit $\bar{\cal{D}}$ attained at some finite time $T$.
	Suppose 
	$x$ achieves 
the maximum utility
for $\bar{\cal{D}}$
over the nodes in the same component as the good nodes.
No protocol can do better when $\bar{\cal{D}}$ is disabled.
	 The proposed protocol attains $x(1-\epsilon_{d})(1-\epsilon_{l})$.  
    \end{proof}
%
%

\section{Concluding Remarks} \label{Concluding Remarks}
We have presented a complete suite of protocols that enables a collection of good nodes interspersed with bad nodes to form a functioning network from start-up, operating at a utility-optimal rate vector, regardless of what the bad nodes conspire to do,
    under a certain system model.  
    Further, the attackers cannot decrease the utility any more than they could by just conforming to the protocol or jamming on each CTV.

This paper is only an initial attempt to obtain a theoretical foundation for a much needed holistic all-layer approach to secure wireless networking, and there are several open issues. An important potential generalization is to allow probabilistic communication.  
%
Since the protocol presented has poor transient behavior, though overall optimal,
it
needs to be explored how to increase efficiency 
in the transient phase.

Much further work remains to be done.   



%

%


\ifCLASSOPTIONcaptionsoff
  \newpage
\fi



%


\bibliographystyle{abbrv}
\noindent

%
%
%
%
%
%
%
%
%
\end{document}